\renewcommand\footnotetextcopyrightpermission[1]{}
\newtheorem{lemma}{\textbf{Lemma}}
\newtheorem{theorem}{\textbf{Theorem}} 
 \newtheorem{example}{\textbf{Example}}
\newtheorem{definition}{\textbf{Definition}}
\newcommand{\TaskShuffler}{{\textsc{TaskShuffler}}}
\newcommand{\TS}{\TaskShuffler}
\newcommand{\NewTS}{{\textsc{TaskShuffler++}}} 
\newcommand\revised[1]{{\color{black} {#1}}}
\begin{document}
\settopmatter{printacmref=false} 
\renewcommand\footnotetextcopyrightpermission[1]{} 
\pagestyle{plain}

\title{\huge TaskShuffler++: Real-Time Schedule Randomization for Reducing Worst-Case Vulnerability to Timing Inference Attacks}

\author{Man-Ki Yoon}
\affiliation{%
  \institution{Yale University}  
}

\author{Jung-Eun Kim}
\affiliation{%
  \institution{Yale University}  
}

\author{Richard Bradford}
\affiliation{%
  \institution{Collins Aerospace}  
}

\author{Zhong Shao}
\affiliation{%
  \institution{Yale University}  
}

\begin{abstract}
  This paper presents a schedule randomization algorithm that reduces 
  the vulnerability of real-time systems to timing inference attacks which attempt to learn the timing of task execution. It utilizes run-time information readily available at each scheduling decision point to increase the level of uncertainty in task schedules, while preserving the original schedulability. The randomization algorithm significantly reduces an adversary's best chance to correctly predict what tasks would run at arbitrary times. This paper also proposes an information-theoretic measure that can quantify the worst-case vulnerability, from the defender's perspective, of an arbitrary real-time schedule.
\end{abstract}

\maketitle
\section{Introduction} 
\label{sec:intro}

Time is a lethal source of information leakage. Adversaries can exploit inherent temporal characteristics of certain algorithmic operations to extract sensitive information such as cryptographic keys~\cite{Ge2018,PereidaGarcia:2016,Liu:2015:LLC} or to inject false data for system destabilization \cite{SchedLeakRTAS2019,Nasri2019}. A critical requirement for successful launch of such attacks is to pinpoint the \emph{timing} of the target's execution. By narrowing the time range when the target would execute, an adversary can initiate and operate an attack at the right time. Such attempts become more effective if the adversary can predict the timing as precisely as possible.

Real-time systems 
are particularly vulnerable to such \emph{timing attacks} because of their \emph{deterministic} nature in operation. In particular, the limited uncertainty in the repeating schedules of real-time tasks, attained by real-time scheduling, enables inference of task execution timings~\cite{SchedLeakRTAS2019}, and thus provides a ground for potential security risks (e.g., false-data injection, side-channel attack) \cite{Nasri2019}. This predictability also helps adversaries establish a covert timing-channel~\cite{embeddedsecurity:volp2008, embeddedsecurity:son2006} through which they can communicate indirectly by altering execution behavior.  

Hence, scheduling obfuscation methods have been proposed~\cite{YoonTaskShuffler,Kruger2018ECRTS} as solutions to timing inference attacks to real-time systems. Scheduling obfuscation can be classified as a moving target defense (MTD) technique~\cite{MTDDHS} which aims to make it more difficult for adversaries to launch a successful security attack by dynamically changing the attack surface~\cite{davi2013gadge,crane2015readactor}. In particular, \TS{}~\cite{YoonTaskShuffler} reduces the determinism perceived by adversaries in fixed-priority real-time schedules by randomly allowing \emph{priority inversions}, subject to schedulability constraints. 

\TS{}, however, has a fundamental limitation in that task executions can have \emph{temporal locality} due to (i) the pessimism in the off-line analysis of tasks' budgets for priority inversions and (ii) the %uniform-weight 
job selection process that does not consider any timing properties of tasks. This makes it easier for an adversary to predict the timing of certain tasks \cite{Nasri2019}. Hence, in this paper we propose a new schedule randomization algorithm, which we call \NewTS{}, that overcomes this limitation. It is based upon the same principles of the randomization mechanism -- namely, priority inversions. However, it does not rely on the static bounds ocn the priority inversion budgets. Instead, it makes use of run-time information that is readily available to the scheduler to determine which tasks are allowed to execute (thus the level of priority inversion) at a given time, while preserving the system's schedulability. This enables an added level of randomness by accurately reflecting the possibilities in the present situation. The uncertainty is further increased in \NewTS{} by treating each of the candidate jobs differently in the final selection process by giving different \emph{weights} to different candidates. As will be seen later, this weighted job selection in fact decreases the chance that tasks have temporal locality by spreading executions across a wider range.

More importantly, \NewTS{} is designed to reduce \emph{schedule min-entropy}, which we propose as a new measure of uncertainty introduced by \emph{any} randomization technique to schedules. It quantifies how \emph{vulnerable} a schedule of a task set is to an adversary's correct prediction in the \emph{worst-case} from the defender's perspective. 
We argue that the schedule min-entropy is the proper indicator for the level of security against timing inference attacks and also that \NewTS{} achieves significantly increased uncertainty in execution timing.  

In summary, this paper makes the following contributions:
\begin{enumerate}
\item We introduce \NewTS{}, a new schedule randomization algorithm that achieves increased uncertainty by taking into account run-time information in the candidate selection as well as the final job selection process;
\item An approximate algorithm that implements \NewTS{} with reduced exactness but with lower complexity; and 
\item A new metric to measure the worst-case vulnerability of a randomized schedule to a timing inference attack.
\end{enumerate}
%\section{System Model} 
\section{Problem Description}

\subsection{System Model}\label{subsec:system_model}

We consider a set of $N$ sporadic tasks $\Gamma = \{\tau_1, \tau_2, \ldots, \tau_N\}$ on a uniprocessor system. Each task $\tau_i$ is
characterized by $(e_i, p_i, d_i)$, where $e_i$ is the worst-case execution
time, $p_i$ is the minimum inter-arrival time between successive
releases and $d_i$ is the relative deadline. We assume implicit deadline, i.e., $d_i=p_i$. \revised{Each invocation of task, i.e., job, may execute for an arbitrary amount of time, upper-bouned by the WCET, $e_i$. But, the actual execution time is even unknown to the scheduler. }
Task priorities are fixed (e.g., assigned according to Rate Monotonic (RM)~\cite{LiuLayland1973}) and distinct. 
Let $hp(\tau_i)$ denote the set of tasks that have higher priorities than $\tau_i$ and $lp(\tau_i)$ denote the set of tasks with lower priorities. Due to the assumption of implicit deadlines, there can be at most one job per task at any time instant if the task is schedulable. Hence we use the terms, task and job, interchangeably and denote each by $\tau_i$.  When no ambiguity arises, the task set $\Gamma$ includes the idle task $\tau_{I}$ that has the lowest priority and $e_I = L - \sum_{\tau_i\in \Gamma} e_i$ and $p_I = L$ where $L$ is length of the hyper-period 
 of $\Gamma$ (i.e., the least common multiple of the task periods).  

We consider a task set $\Gamma$ that is \emph{schedulable} by a
fixed-priority preemptive scheduling. That is, the worst-case
response time of task $i$ 
is less than or equal to the deadline, $d_i$, 
and it is calculated by the iterative response time analysis~\cite{Audsley93}: 
\vspace{-0.1cm}
\begin{align}
\label{eq:wcrt_i}
w_i^{k+1} = e_i + \sum_{\tau_j \in hp(\tau_i)} \bigg \lceil \frac{w_i^{k}}{p_j} \bigg \rceil e_j \le d_i,  	
\end{align} 
where $w_i^0 = e_i$ and the worst-case response time is $w_i^{k+1} = w_i^{k}$ for some $k$. Finally, we assume that there is no synchronization or precedence constraints among tasks and that
$e_i, p_i, d_i \in \mathbb{N}^+$.

The presence of release jitters (in moderation) can increase the randomness of schedules and thus reduce the vulnerability to timing attacks \cite{YoonTaskShuffler}. It is straightforward to extend our analysis to incorporate jitters, 
but to simplify the discussion, we assume no release jitters.  

\subsection{Problem Description}

\noindent\textbf{Adversary model: }
We consider an adversary that tries to infer task execution patterns from observations (e.g., via a form of side-channel) collected over a certain period of time. We do not assume a particular type of side-channel; instead, we consider any form of observation that can help infer the timing of task executions. Such an attack is possible especially in real-time systems because task executions repeat with slight variations. SchedLeak~\cite{SchedLeakRTAS2019} demonstrated such an attack - the attacker task, which knows a victim task's period, observes its own executions, i.e., when it is scheduled and not, over many hyper-periods to infer the victim task's timing. 
Nasri et al. in \cite{Nasri2019} discuss various schedule-based attacks that can be facilitated by successful timing inference.

A schedule randomization raises the time complexity of such attempts because the attacker will observe non-deterministic schedules over time. However, this still requires an assumption that the information on the scheduler's state such as the ready queue is not available to the attacker and thus that the attacker cannot perform a timing inference \emph{instantaneously}. 
\revised{
Nasri et al. in \cite{Nasri2019} analyzes in detail the vulnerability of schedule randomization techniques when the attacker is able to directly observe the scheduler's state. In particular, by observing what tasks have been scheduled and when up to the present and knowing their remaining execution budgets precisely, the attacker can pinpoint the victim task's timing. The predictions on the future schedule of certain tasks become especially easier towards their deadline due to the schedulability constraints. Therefore, we consider the \emph{Weak} attacker model defined in \cite{Nasri2019} which is considered also in \cite{Kruger2018ECRTS, SchedLeakRTAS2019}; that is, attackers cannot observe scheduler state -- what tasks have been scheduled and what tasks are in the ready queue. An attacker can only observe its own execution intervals. SchedLeak~\cite{SchedLeakRTAS2019} demonstrated that, although the attacker cannot help but spend a non-trivial amount of time (i.e., multiple hyper-periods), he/she can infer the timing of a victim task by only observing his/her execution.  

Lastly, we assume that the scheduler is trustworthy and protected from the adversary. Otherwise, the attacker can observe task schedules directly no matter how they are scheduled or even perform more active attacks than timing inference. }%Also, a proper memory separation mechanism is in place to prevent a (malicious) task from accessing/modifying other's memory.} 

\vspace{0.5\baselineskip}
\noindent\textbf{Goal: } For a task set $\Gamma$ that is schedulable by a fixed-priority scheduling without any schedule randomization, our goal is to randomize its schedule in the run-time for an increased difficulty for an adversary to predict which of $\Gamma$ would run at an arbitrary time slot $t$, while guaranteeing the schedulability of the tasks.

\NewTS{} algorithm consists of two phases: (i) \emph{candidate selection} and (ii) \emph{job selection}. It first forms a list of candidate jobs that are allowed to execute at a particular time slot and then selects one from the list in a non-deterministic manner. Most importantly, these processes must be designed in such a way that the task set is \emph{still} schedulable even when randomized.

\subsection{Background: \TaskShuffler{}}
\label{sec:TaskShuffler}

In this section, we briefly review the \TaskShuffler{} algorithm~\cite{YoonTaskShuffler}. It randomizes the schedule of a fixed-priority task set by allowing a random \emph{priority inversion} at each point where a scheduling decision is to be made. 
Since arbitrary priority inversions may cause deadline misses of high priority tasks, 
\TaskShuffler{} limits the amount of priority inversion that each task can endure. For the enforcement, the \emph{worst-case maximum inversion budget} $V_i$ for each task $\tau_i$ is calculated off-line. The budget decreases as $\tau_i$ is delayed by priority inversions by any of lower priority tasks $lp(\tau_i)$. When the budget becomes zero, none of $lp(\tau_i)$ is allowed to execute until $\tau_i$ finishes its current job. The budget is replenished to $V_i$ when a new job of $\tau_i$ is released. 
\begin{figure}[t]
\centering
\includegraphics[width=1\columnwidth]{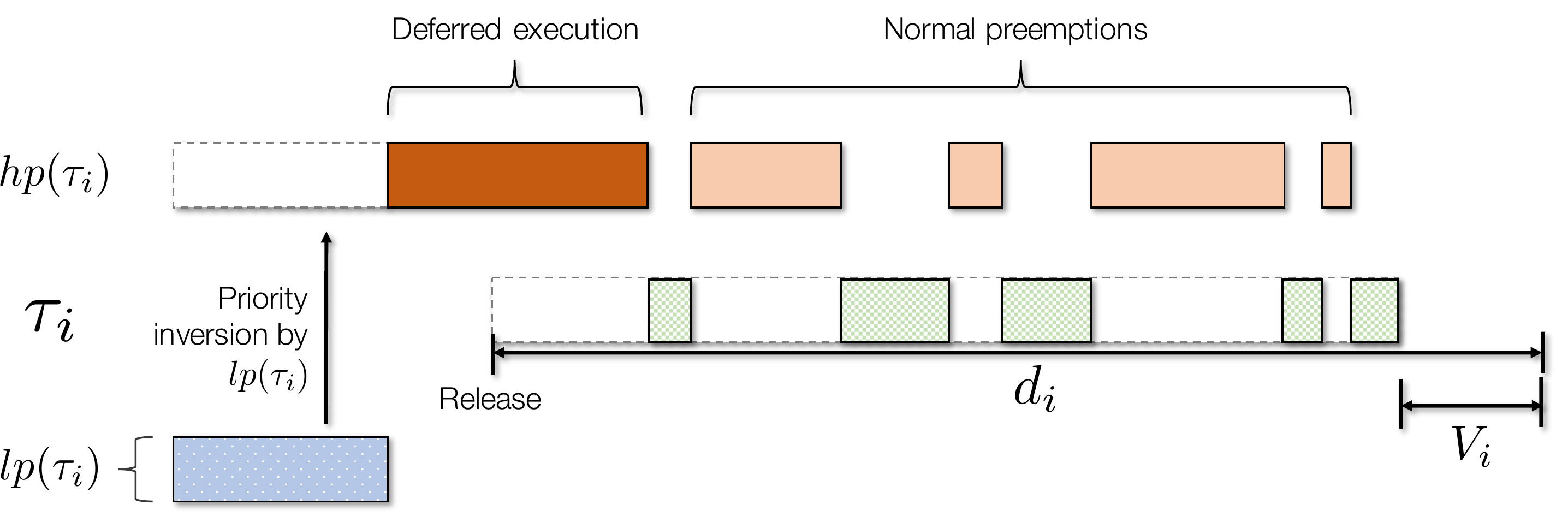}
\caption{Worst-case maximum inversion budget in {\TaskShuffler{}}~\cite{YoonTaskShuffler}.}\label{fig:TaskShufflerMaximumBudget}
\end{figure}
It
is calculated using the worst-case interference from the higher priority tasks $hp(\tau_i)$. In the presence of arbitrary priority inversions, $\tau_i$ can experience more interference from $hp(\tau_i)$ than when no priority inversion is allowed, due to the additional interference by the \emph{deferred executions}~\cite{Rajkumar:1988,Audsley93} caused by some of $lp(\tau_i)$ when $\tau_i$ is not running, as illustrated in Figure~\ref{fig:TaskShufflerMaximumBudget}. 
In addition, to prevent a task $\tau_x$ with $V_x<0$ from missing its deadline, a run-time policy called \emph{Level-$\tau_x$ exclusion policy} in \TaskShuffler{} enforces that none of $lp(\tau_x)$ is allowed to run while any of $hp(\tau_x)$ has an unfinished job, thus removing the deferred executions. 
For the details of the \TS{} algorithm, the readers are referred to \cite{YoonTaskShuffler}.
\section{\NewTS{}}\label{sec:NewTS}

We first explain the limitations of \TS{} and then present a new schedule randomization technique that we call \NewTS{} and discuss how it solves the problems of \TS{}.

\subsection{Problems of \TS{}}\label{subsec:prob_TS}

\TS{} has two major limitations that lead to high \emph{temporal locality}; some tasks appear in a particular time period with higher probability, which makes the schedule more vulnerable to timing inference attack. First of all, the maximum inversion budget $V_i$ is not tight since it is calculated \emph{statically} assuming that deferred executions can happen every time. This lowers the maximum budget $V_i$, which reduces the possibilities that $\tau_i$ could be shuffled with $lp(\tau_i)$. Also, such a lower budget activates the level-$\tau_x$ exclusion policy more frequently than necessary, further reducing priority inversions by certain low priority tasks. Let us consider Example~\ref{example:prob_TS}.
\begin{example}\label{example:prob_TS}
Consider the following task set. Each ${V}_i$ is calculated as shown in the last column of the table:
\begin{center}\small
\begin{tabular}{|c||c|c|c||c|}
\hline 
 & $p_i$ & $e_i$ & $d_i$ & $V_i$ \\ 
\hline \hline 
$\tau_1$ & 5 & 2 & 5 & 3 \\ 
\hline 
$\tau_2$ & 7 & 2 & 7 & -1 \\ 
\hline 
$\tau_3$ & 20 & 3 & 20 & -1 \\ 
\hline 
\end{tabular} 
\end{center}

\begin{center}
\includegraphics[width=0.97\columnwidth]{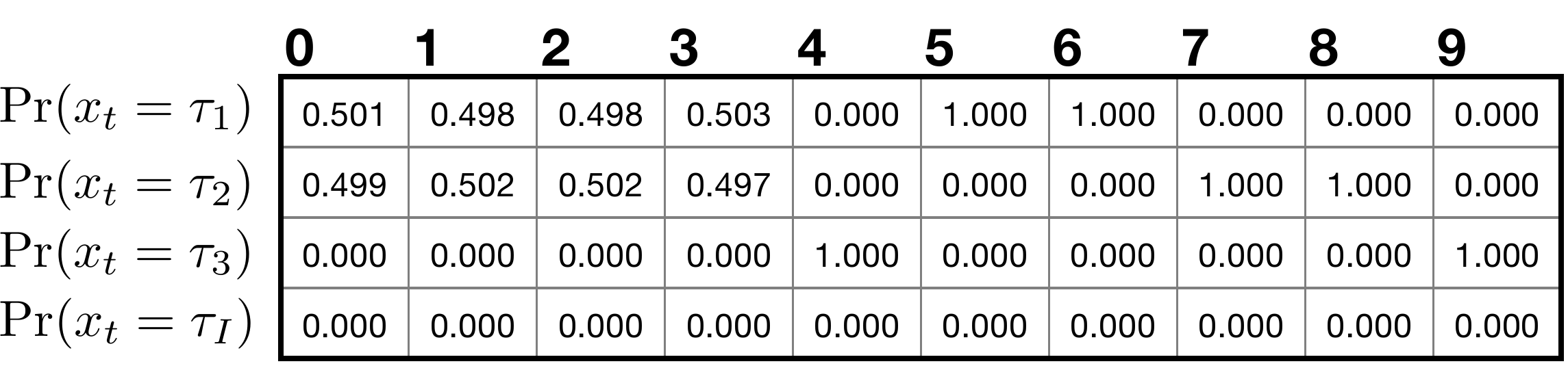}
\end{center}

The table above shows the probability of seeing each task in the first 10 time slots in each hyper-period, measured by running the task set for 100 thousands hyper-periods under \TS{}. 
\end{example}
In the example above, $\tau_3$ cannot execute during $[0,3]$ because $\tau_2$ cannot have a positive inversion budget. Because of this, both $\tau_1$ and $\tau_2$ released at $t=0$ always finish by $t=4$. This leads $\tau_1$ to appear at time $5$ and $6$ in a \emph{deterministic} manner. For a similar reason, $\tau_2$'s second release always appear at time $7$ and $8$. If the attacker was able to observe the scheduler's state directly, he/she can precisely target these timings \cite{Nasri2019}. As we will see later, these tasks can be shuffled in a more non-deterministic way by not using the static (thus pessimistic) bounds on the %priority
 inversion budget. 

Another limitation of \TS{} is in the final job selection from the candidate list. When there are $n$ candidate jobs, each of them can be selected with an equal probability $1/n$. This uniform-weighting further increases the temporal locality of certain jobs by \emph{not} considering the load characteristics. Section~\ref{subsec:weighted_selection} discusses the problem in detail.

\subsection{Design of \NewTS{}}

\noindent\textbf{Dynamic Budgeting for Candidate Selection: }
The pessimism in computing the bounds on the maximum inversion budget in \TS{} is due to the nature of the static analysis that needs to apply the worst-case scenario to any situation. Hence, \NewTS{} solves the problem by taking a \emph{dynamic} approach. That is, instead of enforcing a fixed budget for priority inversion, a decision is made at run-time about whether to allow a priority inversion by each job in the ready queue. If it is certain that the priority inversion at present would not cause any deadline miss of the higher priority tasks in the future with the worst-case scenario, the job is added to the candidate list. As will be discussed in detail in Section~\ref{subsec:exact_newTS}, this requires an \emph{iterative} computation, which is not suitable for use during run-time. Hence, we also present an \emph{approximation} algorithm for \NewTS{} (in Section~\ref{subsec:approx_newTS}) that sacrifices the tightness of the inversion budget bound for lower complexity. 

\vspace{0.5\baselineskip}
\noindent\textbf{Weighted Random Job Selection: }
Given a set of candidate jobs at each time, \NewTS{} selects one non-deterministically while taking into account each job's \emph{remaining workload} until the deadline, which reflects the job's urgency. By assigning a higher weight to a more urgent job now, we reduce the chance that it becomes more urgent later at which point the job would appear with even higher probability. As will be seen in Section~\ref{subsec:weighted_selection}, this weighted random job selection in fact reduces the chance of biased appearance.

\newcommand{\ReadySet}{\mathcal{R}_t}    

\subsection{\NewTS{} Algorithm}\label{subsec:exact_newTS}

Let $L_{\ReadySet} = (\tau_{(1)}, \tau_{(2)}, \ldots, \tau_{|\ReadySet|})$ be the ready queue of the jobs sorted in decreasing order of priority at time $t$. A schedule randomization algorithm picks a job from $L_{\ReadySet}$ in a non-deterministic way instead of selecting the highest priority job $\tau_{(1)}$. As mentioned in Section~\ref{sec:TaskShuffler}, \TS{} makes a random selection based on the worst-case maximum inversion budgets calculated offline. \NewTS{}'s main difference is that it does not rely on such a fixed budget. It instead analyzes each ready job based on the current state and the future executions of the higher priority tasks. 

Suppose we are to make a scheduling decision at time $t$. Then, for each job $\tau_{(i)} \in L_{\ReadySet}$, starting from the highest priority, the scheduler tests if $\tau_{(i)}$'s execution at time $t$ would not cause any deadline miss for all of the higher priority tasks $hp(\tau_{(i)})$ whether or not they are \emph{active} (i.e., has an outstanding job at present). If a priority inversion by $\tau_{(i)}$ may lead any task in $hp(\tau_{(i)})$ to miss deadline, $\tau_{(i)}$ cannot be added to the candidate list for time $t$.

\begin{figure}[t]
\centering
\includegraphics[width=1\columnwidth]{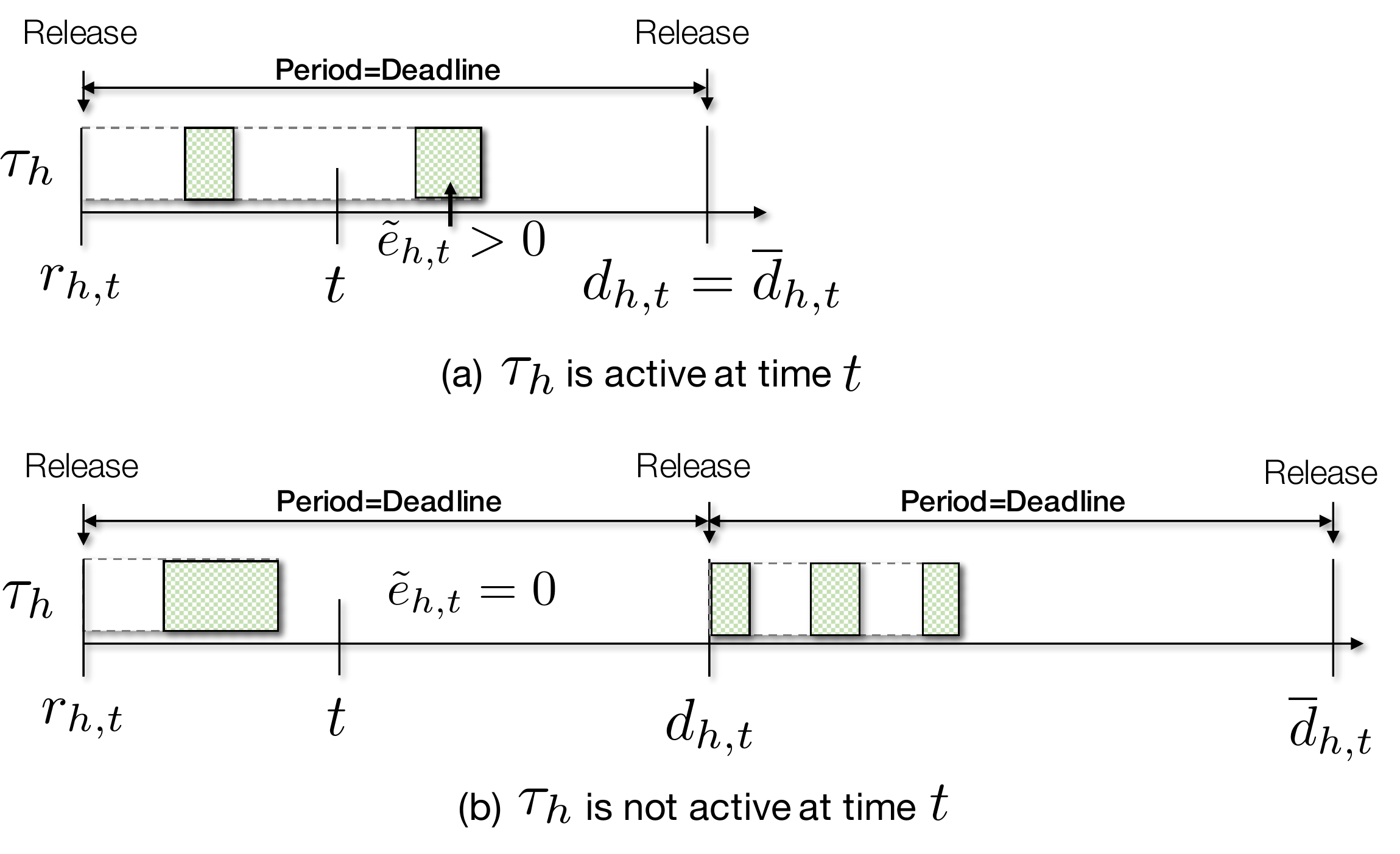}
\caption{The last release time $r_{h,t}$, the residue execution $\tilde{e}_{h,t}$, the current deadline $d_{h,t}$, and the effective deadline $\overline{d}_{h,t}$ of $\tau_h$ as of $t$.}
\label{fig:SomeNotions}
\end{figure}

In order to formally describe the candidate selection process in \NewTS{}, let us first introduce a set of definitions and notations,  which are also depicted in Figure~\ref{fig:SomeNotions}, that characterizes the state of a task at a particular time instant. 
\begin{definition}[$r_{h,t}$: Last release time of task $h$ as of time $t$] 
\label{def:r_h_t}
$r_{h,t}$ is the most recent release time of $\tau_h$ before time $t$. 
\end{definition}

\begin{definition}[$d_{h,t}$: Current deadline of task $h$ as of time $t$] 
\label{def:d_h_t}
$d_{h,t}$ is the deadline of the current job (i.e., most recently released) of $\tau_h$ released at $r_{h,t}$, which is computed by $d_{h,t} = r_{h,t} + d_h$, where $d_h$ is the deadline of $\tau_h$. 
If $\tau_h$ is active at time $t$, $D_{h,t}$ is the absolute deadline of the current release. If $\tau_h$ is not active, $D_{h,t}$ is the absolute deadline of the upcoming release, assuming that it arrives with the minimum inter-arrival time of $\tau_h$. 
\end{definition}

\begin{definition}[$\tilde{e}_{h,t}$: Residue execution of task $h$ at time $t$] 
\label{def:residue_e}
$\tilde{e}_{h,t}$ is the amount of outstanding execution of $\tau_h$ as of time $t$. $\tilde{e}_{h,t}> 0$ if $\tau_h$ is active and thus in the ready queue at time $t$. Otherwise (i.e. $\tau_h$ was completed before $t$) $\tilde{e}_{h,t}=0$. \revised{The scheduler computes $\tilde{e}_{h,t}$ by subtracting the amount of time it has executed since its arrival from the worst-case execution time, $e_h$.}
\end{definition}

\begin{definition}[$\overline{d}_{h,t}$: Effective deadline of task $h$ as of time $t$] 
\label{def:effective_d_h_t}
$\overline{d}_{h,t}$ is the absolute deadline of its current release if it is active at $t$: 
\[\overline{d}_{h,t} = d_{h,t} = r_{h,t} + d_h, \]
or that of the upcoming release if it is inactive at time $t$, assuming that it would arrive with its minimum inter-arrival time:
\[\overline{d}_{h,t} = r_{h,t} + p_h + d_h = r_{h,t} + 2d_h. \]

\end{definition}

\vspace{0.3\baselineskip}
\noindent\textbf{Schedulability test of $\tau_h$:} In what follows, we will test whether $\tau_h \in hp(\tau_{(i)})$ would miss its (effective) deadline if a candidate $\tau_{(i)}$ executes at time $t$ and thus a priority inversion occurs. As will be explained, it is important to make a distinction between the following two cases: (i) when $\tau_h$ is active and (ii) when $\tau_h$ is not active at time $t$ (see Figure~\ref{fig:SomeNotions}). In the former case, the active job of $\tau_h$ should complete by $\overline{d}_{h,t} = d_{h,t}$. In the latter case, we test if the \emph{upcoming} job of $\tau_h$ that will arrive at $r_{h,t} + p_h$ would finish by $\overline{d}_{h,t} =  r_{h,t} + 2 d_h$. The need for considering the schedulability of a future job of $\tau_h$ is because a priority inversion at time $t$ can indirectly interfere (i.e., deferred executions of $hp(\tau_h)$ as explained in Section~\ref{sec:TaskShuffler}) with the future job of $\tau_h$ releasing at $r_{h,t} + p_{h}$. The priority inversion is not allowed if it can lead to a deadline miss of the future $\tau_h$. In order to check for this, we need to compute the \emph{worst-case busy interval} that begins with a priority inversion by a low-priority job in $lp(\tau_h)$ at time $t$. 

\begin{figure}[t]
\centering
\includegraphics[width=1\columnwidth]{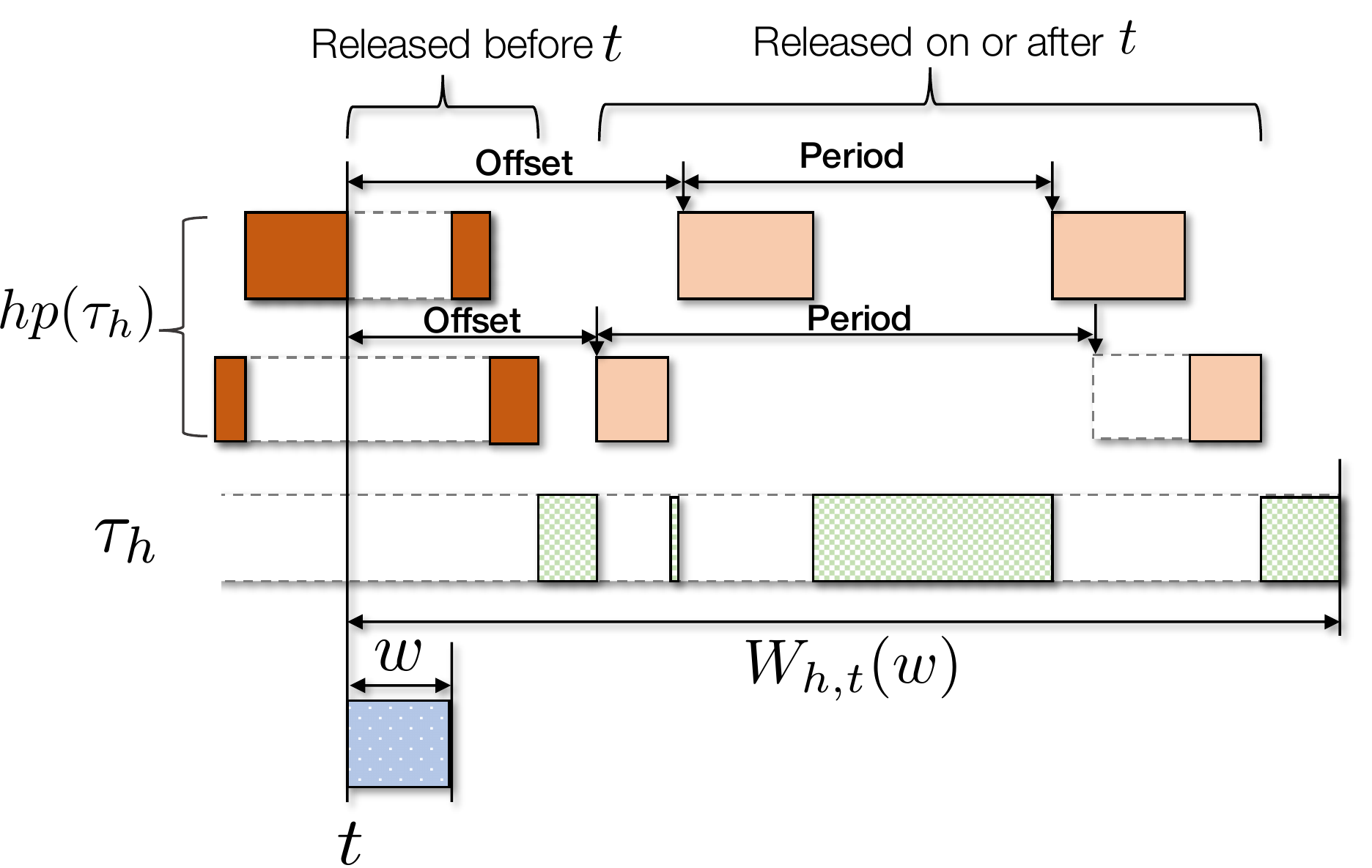}
\caption{Level-$\tau_h$ busy interval with base time $t$ with initial window of size $w$, when $\tau_h$ is active at time $t$.}\label{fig:busy_period}
\end{figure}

\begin{definition}
\label{def:busy_interval}
The level-$\tau_h$ busy interval with base time $t$ and initial window of size $w$, denoted by $W_{h, t}(w) $ and shown in Figure~\ref{fig:busy_period}, is a time window $[t, t+q]$ that is comprised of: %the following:
\begin{enumerate}
\item a priority inversion of size $w$ by a job in $lp(\tau_h)$, 
\item all residue executions of $hp(\tau_{h})$ as of time $t$,
\item the residue execution of $\tau_{h}$ if active at time $t$, and 
\item all the future jobs of $hp(\tau_{h})$ that will arrive on or after $t$ and complete by the end of the busy interval.
\end{enumerate}
$W_{h, t}(w) = q$ is the length of the busy interval and $t+W_{h, t}(w)$ is the first time instant when all jobs of $hp(\tau_{h})$ released during $[t, W_{h, t}(w)]$ and $\tau_{h}$ itself, if it has an active job during the window, are complete. 
\end{definition}
$W_{h, t}(w)$ for given $t$ and $w$ is computed iteratively. Note that at time $t$, the amount of outstanding jobs (i.e., residue executions) of $hp(\tau_h)$ is known. The worst-case busy interval is when all the \emph{future} jobs of $hp(\tau_{h})$ released on or after $t$ arrive with their minimum inter-arrival times. Since the last release times of $hp(\tau_{h})$ before $t$ are known at time $t$, the relative \emph{offsets} of their next releases from time $t$ are known 
and calculated by $o_{j,t} = r_{j,t} + p_j - t$ for each $\tau_j\in hp(\tau_h)$. 
Then, $W_{h, t}(w)$ is computed by the following iterative equation:
\begin{align}\label{eq:busy_period}
W_{h,t}^{k+1}(w) = W_{h,t}^{0}(w) + \hspace{-0.3cm}\sum_{\tau_j\in hpe(\tau_h) } \Bigg\lceil \frac{W_{h,t}^{k}(w)-o_{j,t}}{p_j} \Bigg\rceil_0 e_j,
\end{align}
where $\lceil x \rceil_0$ is lower-bounded by zero.\footnote{Hence, those tasks $\tau_j$ that are released on or after the end of the busy interval, i.e., $W_{h,t}^{k}(w)\le o_{j,t}$, are excluded from the summation in \eqref{eq:busy_period}.} Here, $W_{h,t}^{0}(w)$, i.e., the base busy interval, and $hpe(\tau_h)$ in the summation term are defined differently depending on whether $\tau_h$ is active at $t$: 
\begin{itemize}

\item If $\tau_h$ is active at time $t$ (Figure~\ref{fig:busy_period}): 
	\begin{itemize}
	\item $W_{h,t}^{0}(w)=w + \tilde{e}_{h,t} + \sum_{\tau_j\in hp(\tau_h)} \tilde{e}_{j,t}$, i.e., the total amount of any outstanding workload for jobs of $\tau_h$ and $hp(\tau_h)$ at time $t$ plus the initial window $w$,
	\item $hpe(\tau_h) = hp(\tau_h)$.
	\end{itemize}

\item If $\tau_h$ is not active at time $t$:
	\begin{itemize}
	\item $W_{h,t}^{0}(w)=w + \sum_{\tau_j\in hp(\tau_h)} \tilde{e}_{j,t}$, i.e., the total amount of any outstanding jobs of $hp(\tau_h)$ at $t$ plus the initial window $w$,
	\item $hpe(\tau_h) = hp(\tau_h) \cup \{\tau_h\}.$\footnote{That is, the summation term in \eqref{eq:busy_period} includes $\tau_h$'s upcoming job if the busy interval grows to contain the release of $\tau_h$.}	
	\end{itemize}

\end{itemize}
In the second case, the busy interval may not include the upcoming job of $\tau_h$ (that is released after $t$) if the busy interval ends before its release, i.e., $W_{h, t}^{k}(w) \le o_{h,t}$ for any $k$. Note that \eqref{eq:busy_period} is similar to the computation of dynamic slack \cite{Davis:1993}. Using the iterative procedure, $W_{h, t}(w)$ is computed as follows:
\[\small 
	W_{h, t}(w) =\begin{cases}
	W_{h,t}^{k+1}(w) = W_{h,t}^{k}(w) & {\small \text{if converging for some } k}\\
	\infty &{\small \text{if not converging} }
	\end{cases}.
\]
$W_{h, t}(w)$ is the worst-case (i.e., longest) busy interval that starts with an execution of size $w$ at $t$ by a job of $lp(\tau_h)$. 
Finding $W_{h, t}(w)$ can be viewed as a \emph{simulation} of a priority inversion of size $w$. The scheduler tests if $\tau_h$ would still meet its deadline with the priority inversion in addition to the maximum interference from $hp(\tau_h)$ by checking if the worst-case busy interval ends by the \emph{effective deadline}
\begin{align}\label{eq:schedulability_test}
t + W_{h, t}(w) \le \overline{d}_{h,t}.
\end{align}
Recall that when $\tau_h$ is not active at time $t$, $\overline{d}_{h,t}$ is the deadline of the upcoming job of $\tau_h$ (see Definition~\ref{def:effective_d_h_t}).

\begin{algorithm}[t]
{\small
\caption{\textsc{FindCandidates}($t$, $\Gamma$, $L_{\mathcal{R}_t}$)} \label{alg:find_cand}
\begin{algorithmic}[1]
\STATE $L_\mathcal{C}\gets \tau_{(1)}$ \COMMENT{i.e., the highest-priority job}

\FOR{$\tau_{(i)} = \tau_{(2)}, \ldots, \tau_{(|L_{\mathcal{R}_t}|)}$}
	\FOR{$\tau_{h} \in hp(\tau_{(i)}) $}	
	\STATE $w \gets 1$ \COMMENT{Priority inversion of size 1}
	\STATE $W_{h,t}^{0}(w) \gets w + \sum_{\tau_j\in hp(\tau_h)} \tilde{e}_{j,t}$
	\STATE $hpe(\tau_h) \gets hp(\tau_h)$
	\STATE $\overline{d}_{h,t} \gets r_{h,t} + d_h$	
	\IF {$\tau_h$ is active at $t$}
	\STATE $W_{h,t}^{0}(w) \gets W_{h,t}^{0}(w) + \tilde{e}_{h,t}$
	\ELSE
	\STATE $hpe(\tau_h) \gets hpe(\tau_h) \cup \{ \tau_h \}$
	\STATE $\overline{d}_{h,t} \gets \overline{d}_{h,t}  + d_h$	
	\ENDIF
	\STATE Calculate $W_{h,t}(w)$ using \eqref{eq:busy_period}.
	\IF { $t+W_{h,t}(w) > \overline{d}_{h,t}$}
		\STATE \COMMENT {Potential deadline miss of $\tau_h$. Stop.}
		\RETURN $L_\mathcal{C}$
	\ENDIF
	\ENDFOR
	\STATE \COMMENT {All $hp(\tau_{(i)})$ are schedulable. }
	\STATE $L_\mathcal{C}\gets \tau_{(i)}$ 
\ENDFOR
\RETURN $L_\mathcal{C}$
\end{algorithmic}}
\end{algorithm}

\begin{theorem}
Suppose a set of tasks $\Gamma$ is schedulable with the fixed-priority preemptive scheduling. Then, the schedules randomized by \NewTS{} is still schedulable. 
\end{theorem}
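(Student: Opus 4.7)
The plan is to prove the theorem by induction on the sequence of scheduling decision points, maintaining the following invariant at every such instant $t$: for every task $\tau_h \in \Gamma$, the effective deadline $\overline{d}_{h,t}$ is no earlier than $t + W_{h,t}(0)$, i.e., the residue executions of $\tau_h$ (if active) together with the worst-case future interference from $hp(\tau_h)$ and, when relevant, the upcoming release of $\tau_h$ itself, can be completed by $\overline{d}_{h,t}$. The base case $t=0$ is immediate from the assumption that $\Gamma$ is FP-schedulable (applying \eqref{eq:wcrt_i} with all initial releases synchronous gives exactly $W_{h,0}(0) \le d_h = \overline{d}_{h,0}$).

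For the inductive step, suppose the invariant holds at $t$. If \NewTS{} dispatches the highest-priority ready job $\tau_{(1)}$, then at $t+1$ the residue workload of every $\tau_h$ has either decreased by one unit or remained fixed while no new releases have been missed, so the invariant trivially carries over (no new interference appears, and any new release of some $\tau_h$ refreshes $\overline{d}_{h,t}$ and $\tilde{e}_{h,t}$ to values consistent with the original FP analysis). If instead \NewTS{} performs a priority inversion by selecting $\tau_{(i)}$ from $L_\mathcal{C}$, then by construction of Algorithm~\ref{alg:find_cand} the test \eqref{eq:schedulability_test} with $w=1$ has succeeded for every $\tau_h \in hp(\tau_{(i)})$; that is, even under the worst case in which all higher-priority jobs of $hp(\tau_h)$ arrive with minimum inter-arrival times for the remainder of the busy interval, $\tau_h$ (or its upcoming release if it is currently inactive) still completes by $\overline{d}_{h,t}$. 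Since the unit inversion consumes exactly one unit of this safety margin, the updated state at $t+1$ satisfies $t+1 + W_{h,t+1}(0) \le \overline{d}_{h,t+1}$, preserving the invariant. For tasks $\tau_\ell \in lp(\tau_{(i)}) \cup \{\tau_{(i)}\}$ the invariant is preserved because their effective deadlines are unaffected, while the service they receive is at least as large as under a strict FP policy restricted to the considered interval.

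The main obstacle is justifying the case split on whether $\tau_h$ is active at $t$, and in particular the correctness of using the \emph{upcoming} effective deadline $\overline{d}_{h,t} = r_{h,t} + 2 d_h$ when $\tau_h$ is inactive. The subtlety is that the priority inversion at $t$ does not directly delay $\tau_h$ (since $\tau_h$ has no pending job), but can produce deferred executions of $hp(\tau_h)$ that push workload past the next release of $\tau_h$ at $r_{h,t} + p_h$; consequently one must argue that the worst case for the upcoming job of $\tau_h$ is captured by a level-$\tau_h$ busy interval \emph{starting at $t$} and including $\tau_h$ itself in $hpe(\tau_h)$. This is the classical deferrable-server argument used in \cite{YoonTaskShuffler,Audsley93}, combined with the observation that $r_{h,t} + p_h$ is the earliest possible arrival of the upcoming job, so any later arrival only shortens the effective busy interval bearing on it.

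Finally, the proof concludes by observing that the invariant implies no deadline miss ever occurs: the first hypothetical miss of some $\tau_h$ at $\overline{d}_{h,t}$ would contradict the invariant at the preceding scheduling point, and releases of new jobs, which are the only events that alter $\overline{d}_{h,\cdot}$ discontinuously, are consistent with the invariant because $W_{h,t}(0)$ at such a point reduces to the standard FP response-time recurrence \eqref{eq:wcrt_i}, which is bounded by $d_h$ by hypothesis. Hence any schedule produced by \NewTS{} is schedulable.
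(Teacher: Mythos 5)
Your proof is correct and expands what the paper states almost as a one-liner. The paper's own argument is simply that the busy-interval test in \eqref{eq:busy_period}--\eqref{eq:schedulability_test} is exact in the sense that the actual interference from higher-priority tasks can never exceed what that iteration assumes, so passing the test at each decision point implies no deadline is missed. You arrive at the same conclusion via the same underlying insight (worst-case exactness of the level-$\tau_h$ busy-interval computation), but you make the argument precise by phrasing it as an induction over scheduling decision points with the explicit invariant $t + W_{h,t}(0) \le \overline{d}_{h,t}$ for all $\tau_h$. This buys you several things the paper leaves implicit: a clean justification that passing the test with $w=1$ implies the post-inversion state at $t+1$ still satisfies the invariant (since $W_{h,t}(1) \ge W_{h,t}(0)+1$ and the $t+1$ busy interval is the one-step tail of the $t$ busy interval), an explicit handling of tasks in $lp(\tau_{(i)}) \cup \{\tau_{(i)}\}$ that are not touched by the candidacy test, and, most importantly, an explicit acknowledgment of the subtle deferred-execution effect that motivates testing \emph{inactive} higher-priority tasks against their upcoming effective deadline $r_{h,t}+2d_h$. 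The one place you could tighten further is the base case: you should phrase it as $W_{h,0}(0)$ being bounded above by the synchronous-release response time (the actual initial state need not be synchronous, but that only makes $W_{h,0}(0)$ smaller), rather than saying it equals the recurrence \eqref{eq:wcrt_i} value. Overall, yours is a more rigorous presentation of the same proof idea, not a genuinely different route.
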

\begin{proof}
The proof is straightforward as the schedulability test by \eqref{eq:busy_period} and \eqref{eq:schedulability_test} is exact; the actual interference from the higher priority tasks cannot be more than what is assumed in \eqref{eq:busy_period}. 
\end{proof}

Algorithm~\ref{alg:find_cand} outlines the candidate selection procedure in \NewTS{} that has been explained so far. As mentioned earlier, a candidacy test is performed on each 
ready job 
in the priority order (\textsc{Line 2}). The highest-priority job in the queue is always a candidate because no priority inversion occurs due to the execution of the job (\textsc{Line 1}). Now, suppose we are testing $\tau_{(i)}$ (\textsc{Lines 3}). For the job to be added to the candidate list, the condition in \eqref{eq:schedulability_test} must be satisfied for all $\tau_h \in hp(\tau_{(i)})$ for $w = 1$ if a scheduling decision is to be made at each time slot or $w=\tilde{e}_{(i),t}$ if $\tau_{(i)}$ is to complete in the priority inversion mode. Algorithm~\ref{alg:find_cand} sets $w=1$ as an example. If there exists at least one $\tau_h \in hp(\tau_{(i)})$ such that \eqref{eq:schedulability_test} does not hold for (\textsc{Line 15}), $\tau_{(i)}$ is not added to the candidate list and the candidate search for time $t$ stops (\textsc{Lines 16--17}). This is because if $\tau_h$ would miss its deadline due to the execution of $\tau_{(i)}$, it would miss the deadline due to $\tau_{(i+1)}$ anyway. The procedure eventually terminates at either \textsc{Line 17} when it finds at least one $\tau_{(i)}$ that may lead some or all of its higher priority tasks to miss deadline, or at \textsc{Line 23} when all jobs in the ready queue are added to the candidate list. 

In fact, not all of $hp(\tau_{(i)})$ need to be examined (\textsc{Line 2}) in Algorithm~\ref{alg:find_cand}. If $\Gamma'_{(i-1)} \subseteq hp(\tau_{(i-1)}) $ is the set that was examined for $\tau_{(i-1)}$ and passed the schedulability test, $\Gamma'_{(i-1)}$ are also schedulable with $\tau_{(i)}$'s priority inversion. This is because the analysis in \eqref{eq:busy_period} depends only on the size of a priority inversion, i.e., $w$, not on who is making the priority inversion. Hence, for $\tau_{(i)}$ we only need to consider $\Gamma'_{(i)}= hp(\tau_{(i)}) - \Gamma'_{(i-1)}$. Therefore, at most $N$ (i.e., the task set size) tests are performed in Algorithm~\ref{alg:find_cand}. Hence, the algorithm can be implemented by a single loop over the task set, examining each task $\tau_h$'s schedulability for a fixed size of priority inversion. 

\begin{example}
Let us consider the example task set used in Example~\ref{example:prob_TS}.
\begin{center}
\includegraphics[width=0.95\columnwidth]{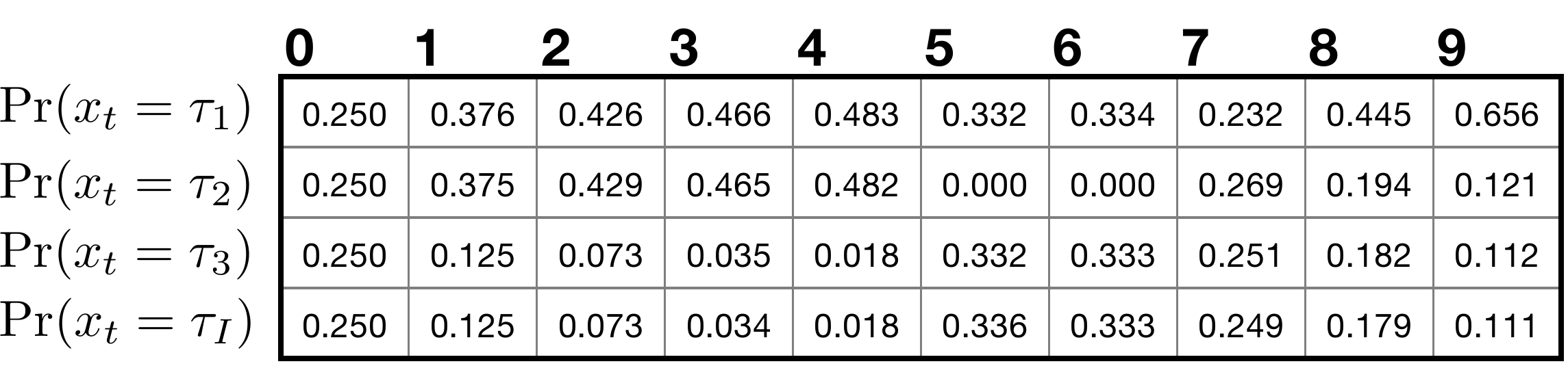}
\end{center}
The table above shows the probability of seeing each task in the first 10 time slots in each hyper-period when randomized by \NewTS{}. We can see that tasks execute over a wider range of time, effectively increasing the uncertainty in task execution at each time slot. In Section~\ref{sec:entropy}, we discuss what this result means in terms of the worst-case vulnerability against timing inference attack. 
\end{example}

\subsection{Approximate \NewTS{}}\label{subsec:approx_newTS}

In this section we present an approximate \NewTS{} algorithm that does not require the exact simulation of priority inversion during the candidate selection. Suppose a candidate selection is being made at time $t$. As in the exact algorithm, for each job $\tau_{(i)} \in L_{\ReadySet}$, starting from the highest priority, the scheduler checks if $\tau_{(i)}$'s priority inversion would still allow to meet deadlines for all of the higher priority tasks $hp(\tau_{(i)})$. Let $\tau_h \in hp(\tau_{(i)})$ be the task being tested. The approximate algorithm considers the two cases that depend on $\tau_h$'s status (active vs. inactive) at time $t$ separately. Let us call them \textsc{Test I} and \textsc{Test A} if $\tau_h$ is inactive and active, respectively, at time $t$.

\vspace{0.5\baselineskip}
\noindent\textbf{1) Test I: Inactive $\tau_h$ at time $t$}

The main difficulty in the approximate \NewTS{} is how to test whether an inactive $\tau_h$ would satisfy its deadline after it is released. For this, \textsc{Test I} is composed of \textsc{Test I-1} and \textsc{Test I-2}. \textsc{Test I-1} tests if a busy interval that begins with a priority inversion at present would end before $\tau_h$'s earliest next arrival. Now, let $o_{h,t}$ be the relative offset of the $\tau_h$'s earliest next arrival from time $t$, which is computed by $o_{h,t} = r_{h,t} + p_h - t$. If the following inequality holds, 
$\tau_h$ that releases at $t + o_{h,t}$ is guaranteed to be schedulable even if a priority inversion of size $w$ by any $lp(\tau_h)$ occurs at $t$:
\begin{align}\label{eq:theorem_D1}
w + \sum_{\tau_{j} \in hp(\tau_h)} \tilde{e}_{j,t} +  \sum_{\tau_{j} \in hp(\tau_h)} \bigg\lceil\frac{o_{h,t}-o_{j,t}}{p_j}\bigg\rceil_0 e_j \le o_{h,t}.
\end{align}

\begin{figure}[t]
\centering
\includegraphics[width=1\columnwidth]{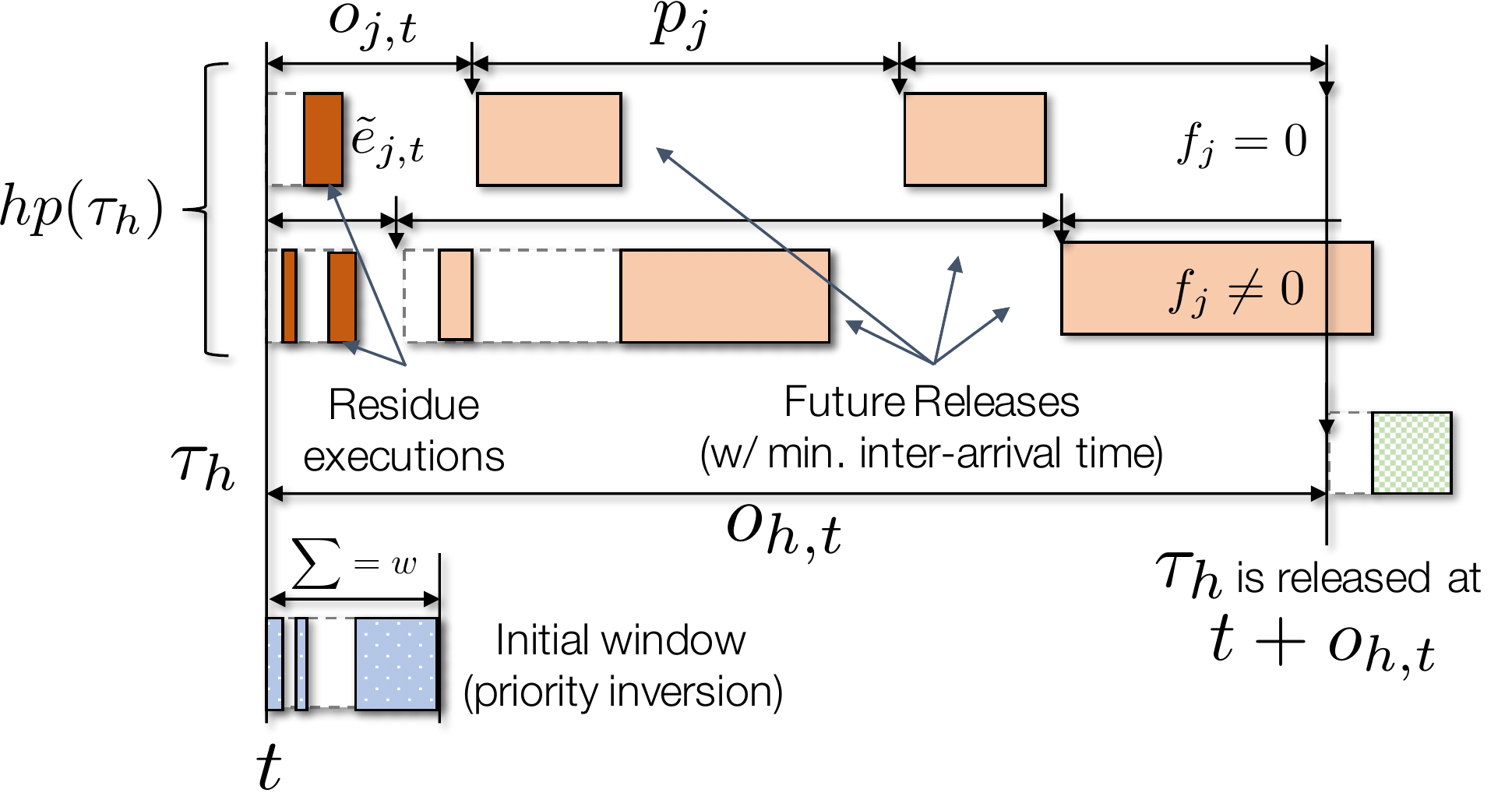}
\caption{Busy interval analysis for Theorem~\ref{theorem:test_I_1}.}\label{fig:theorem_test_I_1}
\end{figure}

\begin{theorem}\label{theorem:test_I_1}
A level-$\tau_h$ busy interval of initial size $w$ at time $t$ ends before or on $\tau_h$'s earliest next arrival if \eqref{eq:theorem_D1} holds. 
\end{theorem}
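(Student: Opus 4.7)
The plan is to bound the fixed-point value $W_{h,t}(w)$ by $o_{h,t}$ by showing, via induction on the iteration index $k$ of recurrence \eqref{eq:busy_period}, that every iterate $W_{h,t}^{k}(w)$ stays at most $o_{h,t}$. Because the recurrence is monotone nondecreasing (its right-hand side is a sum of $\lceil\cdot\rceil_0$ terms that are nondecreasing in the argument $W_{h,t}^{k}(w)$), once all iterates are bounded by $o_{h,t}$ the sequence converges, and its limit $W_{h,t}(w)$ satisfies $W_{h,t}(w)\le o_{h,t}$.

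The key observation is that, because $\tau_h$ is inactive at $t$, Definition~\ref{def:busy_interval} gives $hpe(\tau_h)=hp(\tau_h)\cup\{\tau_h\}$ and $W_{h,t}^{0}(w)=w+\sum_{\tau_j\in hp(\tau_h)}\tilde{e}_{j,t}$, and the earliest next release of $\tau_h$ sits at offset $o_{h,t}$. As long as the current iterate $W_{h,t}^{k}(w)$ does not exceed $o_{h,t}$, the term for $\tau_h$ inside the summation in \eqref{eq:busy_period} collapses to zero, since $(W_{h,t}^{k}(w)-o_{h,t})/p_h \le 0$ and hence $\lceil\,\cdot\,\rceil_0=0$. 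Consequently, under the induction hypothesis the recurrence effectively reduces to a sum over $hp(\tau_h)$ only, which is exactly the shape of the left-hand side of \eqref{eq:theorem_D1}.

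Carrying out the induction: for the base case, $W_{h,t}^{0}(w)=w+\sum_{\tau_j\in hp(\tau_h)}\tilde{e}_{j,t}$, and since the remaining ceiling sum in \eqref{eq:theorem_D1} is nonnegative, we have $W_{h,t}^{0}(w)\le o_{h,t}$ directly from the hypothesis. For the inductive step, assuming $W_{h,t}^{k}(w)\le o_{h,t}$, monotonicity of $\lceil(\cdot-o_{j,t})/p_j\rceil_0$ in the first argument together with the vanishing of the $\tau_h$ term yields
\begin{align*}
W_{h,t}^{k+1}(w) \;&\le\; w + \sum_{\tau_j\in hp(\tau_h)}\tilde{e}_{j,t} + \sum_{\tau_j\in hp(\tau_h)}\bigg\lceil\frac{o_{h,t}-o_{j,t}}{p_j}\bigg\rceil_0 e_j \\
&\le\; o_{h,t},
\end{align*}
where the final step is hypothesis \eqref{eq:theorem_D1}. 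This closes the induction and establishes $W_{h,t}(w)\le o_{h,t}$.

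The only delicate point I anticipate is justifying that including $\tau_h$ inside $hpe(\tau_h)$ does not ruin the argument: the trick is that we never need to charge any future job of $\tau_h$ against the interval as long as the iterate has not yet reached $o_{h,t}$, so the $\tau_h$ contribution is precisely self-consistently zero throughout the induction. A secondary bookkeeping point worth checking is the convention $\lceil x\rceil_0=0$ for $x\le 0$ (in particular $x=0$), which is what makes the $\tau_h$ term vanish at the boundary case $W_{h,t}^{k}(w)=o_{h,t}$ — this is exactly what permits the conclusion ``before \emph{or on}'' rather than a strict inequality.
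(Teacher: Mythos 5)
Your proof is correct, and it takes a genuinely different route from the paper's. The paper argues at the level of schedule semantics: it decomposes the physical time window $[t, t+o_{h,t}]$ into the priority inversion, the residue executions, the floor-count of complete higher-priority releases, the partial last executions $f_j$, and an idle-time term $R$ (its equation~\eqref{eq:theorem_D2}), then manipulates \eqref{eq:theorem_D1} via $\lceil x\rceil-\lfloor x\rfloor\in\{0,1\}$ and $0\le f_j\le e_j$ to conclude $R\ge 0$, i.e., that the busy interval breaks by $t+o_{h,t}$. You instead work purely at the level of the fixed-point recurrence \eqref{eq:busy_period}: an induction on $k$ showing every iterate $W^k_{h,t}(w)$ stays at most $o_{h,t}$, exploiting monotonicity of the $\lceil\cdot\rceil_0$ terms and the observation that the $\tau_h$ summand vanishes whenever the iterate has not crossed $o_{h,t}$. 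What this buys you is a more self-contained and mechanically checkable argument: you never have to appeal to an informal accounting of busy vs.\ idle processor time within the window (the paper's equation~\eqref{eq:theorem_D2} quietly presupposes the described work all fits inside the window), and the boundary case $W^k_{h,t}(w)=o_{h,t}$ is handled cleanly by $\lceil 0\rceil_0=0$, which is exactly what yields the ``before \emph{or on}'' in the statement. The paper's argument, on the other hand, gives more scheduling intuition about where the ceiling-minus-floor slack actually comes from. One small thing you could have spelled out is why the iterate sequence is nondecreasing --- you state that the right-hand side is monotone in the argument, but the base step $W^1_{h,t}(w)\ge W^0_{h,t}(w)$ (because the summation contributes a nonnegative quantity) is what kicks off the chain; this is a one-line omission, not a gap.
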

\begin{proof}	
\eqref{eq:theorem_D1} excludes those tasks that will never arrive by the release of $\tau_h$ due to $\lceil \cdot \rceil_0$. (In what follows, we drop the subscript $0$ from the ceiling and floor functions for the brevity.) For each $\tau_j\in hp(\tau_h)$, it is released at least $ \big \lfloor {(o_{h,t} - o_{j,t})}/{p_j} \big \rfloor$ times. Let $o_{h,j,t} = o_{h,t} - o_{j,t}$ be the distance of $\tau_h$'s release from $\tau_j$'s release. If ${o_{h,j,t}}/{p_j} \notin \mathbb{Z}$, an additional instance of $\tau_j$ is released before $t+o_{h,t}$. It can execute for $f_j$ time units, where $0\le f_j\le e_j$. 
If ${o_{h,j,t}}/{p_j} \in \mathbb{Z}$,
$f_j=0$. Figure~\ref{fig:theorem_test_I_1} shows both cases. 
Now, the time window $[t, t+o_{h,t}]$ is the sum of 
\begin{align}\label{eq:theorem_D2}
o_{h,t} = w + \hspace{-0.1cm} \sum_{\tau_{j} \in hp(\tau_h)}  \hspace{-0.1cm}\tilde{e}_{j,t} +  \hspace{-0.1cm}\sum_{\tau_{j} \in hp(\tau_h)}  \hspace{-0.1cm} \bigg\lfloor\frac{o_{h,j,t}}{p_j}\bigg\rfloor e_j +  \hspace{-0.1cm} \sum_{\tau_{j} \in hp(\tau_h)}\hspace{-0.1cm} f_j + R,
\end{align}
where $R$ is the sum of any idle times that occurs in $[t, t+o_{h,t}]$. We prove the claim by showing that $R$ is non-negative (i.e., the busy period is discontinued) and hence $o_{h,t}$ correctly upper-bounds the left-hand side of \eqref{eq:theorem_D1}. 
First, substituting $o_{h,t}$ in \eqref{eq:theorem_D1} with \eqref{eq:theorem_D2}, 
\begin{align}
\sum_{\tau_{j} \in hp(\tau_h)} \hspace{-0.1cm}\bigg\lceil\frac{o_{h,j,t}}{p_j}\bigg\rceil e_j  \le \hspace{-0.1cm} \sum_{\tau_{j} \in hp(\tau_h)}  \hspace{-0.1cm}\bigg\lfloor\frac{o_{h,j,t}}{p_j}\bigg\rfloor e_j + \hspace{-0.1cm}\sum_{\tau_{j} \in hp(\tau_h)} \hspace{-0.1cm}f_j + R\nonumber
\end{align}
which can be rewritten as
\begin{align}
\sum_{\tau_{j} \in hp(\tau_h)} \bigg( \bigg\lceil\frac{o_{h,j,t}}{p_j}\bigg\rceil  - \bigg\lfloor\frac{o_{h,j,t}}{p_j}\bigg\rfloor \bigg) e_j  - \sum_{\tau_{j} \in hp(\tau_h)} f_j \le R.\nonumber
\end{align}
Because $\lceil x \rceil - \lfloor x \rfloor$ = 0 or 1 if $x\in \mathbb{Z}$ or $x\notin \mathbb{Z}$, respectively, 

\begin{align}
\sum_{\substack{\tau_{j} \in hp(\tau_h),  {o_{h,j,t}}/{p_j} \notin \mathbb{Z}}} (e_j - f_j) \le R. \nonumber
\end{align}
Since $0\le f_j \le e_j$, the left hand-side is non-negative, so is $R$. Since $R\ge 0$, \eqref{eq:theorem_D2} leads to \eqref{eq:theorem_D1}.
Therefore, the busy interval of initial size $w$ that begins at $t$ cannot be longer than $o_{h,t}$ if \eqref{eq:theorem_D1} holds.  
\end{proof}

In other words, if \eqref{eq:theorem_D1} is satisfied, any job in $lp(\tau_h)$ is allowed to execute for $w$ while not imposing any additional delay on the upcoming $\tau_h$ due to the priority inversion because the busy interval will finish by $\tau_h$'s release at $t+o_{h,t}$.

However, \eqref{eq:theorem_D1} is not a necessary condition for $\tau_h$'s schedulability. That is, if \eqref{eq:theorem_D1} does not hold, $\tau_h$ may or may not experience a delay due to deferred executions of $hp(\tau_h)$. To avoid the exact test, we perform \textsc{Test I-2}, an approximate test with a pessimistic assumption about the jobs of $hp(\tau_h)$ that are released before $\tau_h$'s release. In \textsc{Test I-2}, we first compute the worst-case residue executions of $hp(\tau_h)$ at $\tau_h$'s upcoming release (at time $t' = t+ o_{h,t}$) and treat all of them as deferred executions. For the brevity in discussion, let $\rho_{h,t'}$ denote the sum of the residues and call it the \emph{overflow} of $hp(\tau_h)$ at $t'$: 
\begin{align}
\rho_{h,t'} = \sum_{\tau_j \in hp(\tau_h)} \tilde{e}_{j,t'}.\nonumber %\label{eq:overflow}
\end{align}
Finding the correct value of $\rho_{h,t'}$ requires an iterative procedure as $\tilde{e}_{j,t'}$ is unknown at $t$. Hence, we find an upper bound $\overline{\rho}_{h,t,t'}$ instead as shown in Figure~\ref{fig:theorem_test_I_2}:
\begin{align}%\label{eq:upperbound_overflow}
\overline{\rho}_{h,t,t'} = \sum_{\tau_j \in hp(\tau_h)} \hspace{-0.2cm}\bigg (\alpha_{t,t',j} \cdot  e_j + (1-\alpha_{t,t',j})\cdot \tilde{e}_{j,t} \bigg)  -  (t' - r^*),\nonumber
\end{align}
where 
\begin{itemize}
\item $\alpha_{t,t',j}=1$ if $\tau_j$ arrives in $[t, t')$ with their minimum inter-arrival times, i.e., if $o_{j,t} < t'-t$. Otherwise, 0. 
\item $r^*$ is the latest release time among those $\tau_j \in hp(\tau_h)$ that release in $[t, t')$, and computed by 
\[r^* = \max_{\substack{\tau_j\in hp(\tau_h)\\  \alpha_{t,t',j}=1}} \bigg(o_{j,t} + \bigg\lfloor \frac{(t'-t)-o_{j,t}}{p_j} \bigg\rfloor p_j \bigg).\]

\end{itemize}

\begin{figure}[t]
\centering
\includegraphics[width=1\columnwidth]{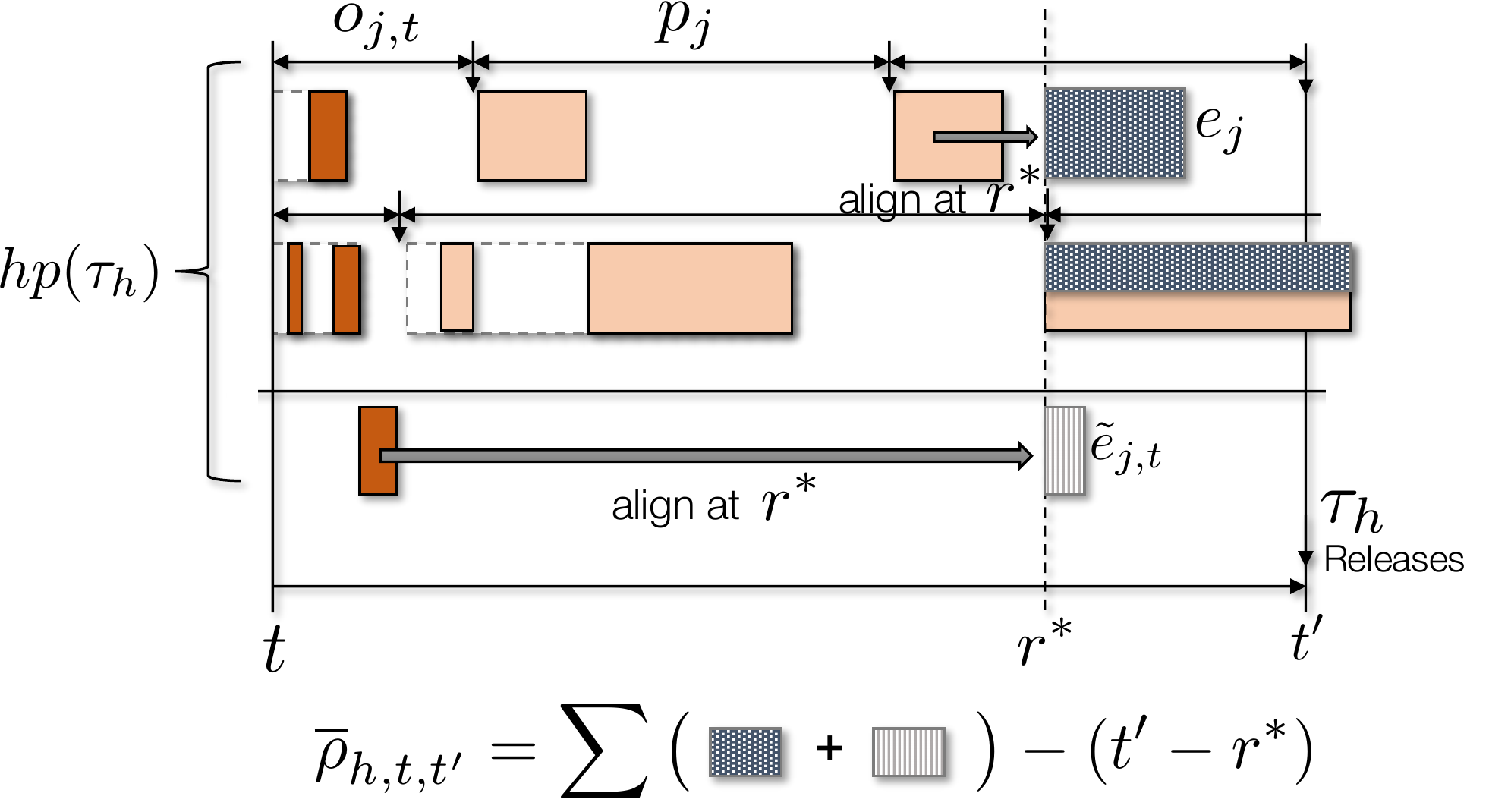}
\caption{Worst-case bound on the overflow of $hp(\tau_h)$ at $\tau_h$'s release at $t'$.}\label{fig:theorem_test_I_2}
\end{figure}

That is, $\overline{\rho}_{h,t,t'}$ is obtained by assuming as if each instance of $\tau_j \in hp(\tau_h)$ is delayed until the 
latest release point among them. This approximate residue at time $t'$ when $\tau_h$ releases is an upper bound of the true residue $\rho_{h,t'}$. 
\begin{lemma}\label{lemma:rho_upperbound}
$\overline{\rho}_{h,t,t'}$ is an upper bound of $\rho_{h,t'}$.
\end{lemma}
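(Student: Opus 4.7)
My plan is to prove $\overline{\rho}_{h,t,t'} \ge \rho_{h,t'}$ by bounding each summand $\tilde{e}_{j,t'}$ of $\rho_{h,t'} = \sum_{\tau_j \in hp(\tau_h)} \tilde{e}_{j,t'}$ individually, splitting $hp(\tau_h)$ according to the indicator $\alpha_{t,t',j}$, and then accounting separately for the processor time that must be charged to $hp(\tau_h)$ during $[r^*, t']$.

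First, for each $\tau_j \in hp(\tau_h)$ with $\alpha_{t,t',j} = 0$, no new release of $\tau_j$ falls in $[t, t')$, so the only potential carryover into $t'$ is the job already present at $t$; since execution only reduces residue, $\tilde{e}_{j,t'} \le \tilde{e}_{j,t}$. For each $\tau_j$ with $\alpha_{t,t',j} = 1$, a new release occurs at $t_j = r_{j,t} + p_j$ with $t \le t_j \le r^*$. Because the randomized schedule preserves schedulability and deadlines are implicit, the old job carrying $\tilde{e}_{j,t}$ must be completed by its deadline $t_j$; hence any residue at $t'$ is confined to the new job and is at most $e_j$. Summing over $hp(\tau_h)$ yields $\rho_{h,t'} \le \sum_{\tau_j} \big(\alpha_{t,t',j}\, e_j + (1-\alpha_{t,t',j})\, \tilde{e}_{j,t}\big)$.

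Second, I would extract the $(t' - r^*)$ subtraction by examining the interval $[r^*, t']$. By the definition of $r^*$, no job of $hp(\tau_h)$ is released in $(r^*, t']$, so the $hp(\tau_h)$ backlog can only be reduced there. In the worst-case scenario that $\overline{\rho}_{h,t,t'}$ is meant to cover---the level-$\tau_h$ busy interval starting at $t$ which extends past $t'$ with $\tau_h$ still inactive---every slot in $[r^*, t']$ is spent on $hp(\tau_h)$ workload, because $\tau_h$ itself cannot run and any further priority inversion in $[r^*, t']$ would be disallowed by its own candidacy test against the remaining slack of $\tau_h$. Therefore at least $(t' - r^*)$ units of processing after $r^*$ are charged to $hp(\tau_h)$, which combined with the first step gives $\rho_{h,t'} \le \sum_{\tau_j} \big(\alpha_{t,t',j}\, e_j + (1-\alpha_{t,t',j})\, \tilde{e}_{j,t}\big) - (t' - r^*) = \overline{\rho}_{h,t,t'}$.

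The main obstacle will be making the charging argument in the second step fully rigorous. I would frame it as a coupling: take any feasible schedule and transform it by postponing every release of $\tau_j \in hp(\tau_h)$ falling in $[t, t')$ to time $r^*$. This transformation only concentrates work later, so the residue at $t'$ cannot decrease; in the transformed system all of the $hp(\tau_h)$ work is available at $r^*$, and the processor drains it at rate one per unit time until $t'$, giving exactly $\overline{\rho}_{h,t,t'}$ as the worst-case residue. Showing that this postponement truly dominates in every scheduler state---including those in which lower-priority tasks execute under subsequently permitted priority inversions---is the delicate point and would be handled via a standard exchange argument on the work-conserving portions of the trace.
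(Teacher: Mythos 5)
Your overall skeleton mirrors the paper's: split $hp(\tau_h)$ by the indicator $\alpha$, bound the residue task by task, and then account for the $(t'-r^*)$ units of processor time. However, the way you combine the two steps has a genuine gap. Step~1 as you wrote it establishes $\rho_{h,t'} \le \sum_j (\alpha_{t,t',j}\,e_j + (1-\alpha_{t,t',j})\,\tilde e_{j,t})$, i.e., a bound on the residue \emph{at $t'$ itself}. From that and the observation that $(t'-r^*)$ units are processed in $[r^*,t']$, you cannot subtract $(t'-r^*)$ from the right-hand side; that is not a valid inference. What is needed (and what the paper does) is to bound the residue \emph{at $r^*$} by $E_1 + E_2$ (the paper writes this residue as $E = E_1 + E_2 - \epsilon$ with $\epsilon\ge 0$ for ``time progress''), and only then write $\rho_{h,t'} = E - (t'-r^*)$ using the fact that no new $hp(\tau_h)$ release occurs after $r^*$. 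Your per-task bounds do, in fact, hold at time $r^*$ for the same reasons you give, so the fix is local --- but as written the two steps do not compose.

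Your justification for ``every slot in $[r^*,t']$ is spent on $hp(\tau_h)$ workload'' is also off. You appeal to future candidacy tests disallowing further priority inversions, but that is circular: the soundness of those tests is exactly what Lemma~\ref{lemma:rho_upperbound} and Theorem~\ref{theorem:v_h_schedulable} are being used to establish. The correct justification, implicit in the paper, is that $\rho_{h,t'}$ is evaluated inside the worst-case level-$\tau_h$ busy interval model of Definition~\ref{def:busy_interval}: after the initial inversion of size $w$, that interval by construction contains only $hp(\tau_h)$ work (and $\tau_h$, but $\tau_h$ is inactive here), so the processor is continuously draining $hp(\tau_h)$ backlog over $[r^*,t']$. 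Your fallback coupling/exchange argument --- postponing every $hp(\tau_h)$ release in $[t,t')$ to $r^*$ and arguing dominance over all feasible schedules --- is a much heavier tool than the paper needs, and you rightly flag that it is delicate; the paper avoids it entirely by bounding within the busy-interval analysis rather than quantifying over all schedules.
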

\begin{proof}
Let us first consider those $\tau_j$ that have job release(s) in the window $[t, t')$, i.e., $\alpha_{t,t',j}=1$. Let $E_1$ denote the sum of their last executions. Now, let $E_2$ be the sum of $\tilde{e}_{j,t}$ of those $\tau_j$ that have no job release in the window, i.e., $\alpha_{t,t',j}=0$. Let $E$ be the sum of their remaining executions at time $r^*$. Then, $E=E_1+E_2-\epsilon$ for a non-negative $\epsilon$ because of time progress. Since there is no more new job invocation from $r^*$, the total residue at $t'$ is
\[\sum_{\tau_j \in hp(\tau_h)} \tilde{e}_{j,t'} = E_1+E_2-\epsilon-(t'-r^*)\le E_1 + E_2 - (t' - r^*),\]
since $\epsilon$ is non-negative. Therefore, $\rho_{h,t'} \le \overline{\rho}_{h,t,t'}$.  
\end{proof}
Computing $\overline{\rho}_{h,t,t'}$ is the crux of \textsc{Test I-2}. It treats $\overline{\rho}_{h,t,t'}$ as the maximum amount of deferred executions by $hp(\tau_h)$ at time $t'$ (i.e., released before but not finished by $t'$). Then, the scheduler tests if $\tau_h$ would still meet its deadline even in the presence of such worst-case delay. 
For this, we need to define the \emph{maximum slack} of $\tau_h$:
\begin{definition}[$\overline{V}_h$: Maximum Slack of Task $h$]\label{def:max_slack}
$\overline{V}_h$ is the maximum amount of time that $\tau_h$ can additionally have while meeting its deadline 
when there are \emph{no deferred executions} of $hp(\tau_h)$ at $\tau_h$'s release. That is,
\begin{align}\label{eq:nominal_budget}
\overline{V}_h = \operatorname*{argmax}_{q} \Big( e_h + q + \mathbf{I_h} (e_h + q) \Big ) \le d_h,
\end{align}
where $\mathbf{I_h}(e_h + q)$ is a function that returns the worst-case interference from $hp(\tau_h)$ that $\tau_h$ can experience when its execution time is bloated to $e_h+q$ and $\tau_h$ is released together with all of $hp(\tau_h)$.
\end{definition}
The maximum slack of $\tau_h$ is calculated off-line using the response time analysis in \eqref{eq:wcrt_i} by replacing $e_h$ with $e_h + q$ and increasing $q$ from 0 while the worst-case response time of $\tau_h$ does not exceed the deadline $d_h$.
Note that by the definition and \eqref{eq:wcrt_i}, the maximum slack of every task $\tau_h\in \Gamma$ is non-negative for a schedulable task set $\Gamma$. 
Also note that $\overline{V}_h$ is always greater than or equal to $V_h$, i.e., the worst-case inversion budget used in \TS{} (Section~\ref{sec:TaskShuffler}).

\begin{example}
Let us consider the example task set used in Example~\ref{example:prob_TS} again. Then, $\overline{V}_1 = 3$, $\overline{V}_2 = 1$, and $\overline{V}_3 = 3$. This means that, for instance, $\tau_2$ can have additional 1 time unit while not missing its deadline if there was \emph{no deferred execution} of $\tau_1$ when $\tau_2$ is being released. 
\end{example}
It is important to note that $\overline{V}_h$ is the budget when there are \emph{no} deferred executions by $hp(\tau_h)$ when $\tau_h$ is being released. Hence, if there \emph{are} deferred executions and the sum of them exceeds $\overline{V}_h$, $\tau_h$ may miss its deadline. Therefore, \textsc{Test I-2} checks if $\overline{\rho}_{h,t,t'}>\overline{V}_h$. If true, no priority inversion is allowed by any of $lp(\tau_h)$ $t$. 
Theorem~\ref{theorem:v_h_schedulable} will prove that this ensures that $\tau_h$ will not miss its deadline.

\vspace{0.5\baselineskip}
\noindent\textbf{2) Test A: Active $\tau_h$ at time $t$}

This test is simpler than \textsc{Test I}. \textsc{Test A} simply uses a counter called \emph{remaining inversion budget}, $v_h$, for each task $\tau_h$. It represents the amount of time that can be given to the lower priority tasks $lp(\tau_h)$ to execute in a priority inversion mode while $\tau_h$ is active. $v_h$ is initialized at each release of $\tau_h$, but varies depending on the run-time state of $hp(\tau_h)$. In the simplest case, it could be set to $\overline{V}_h$ less the amount of deferred executions by $hp(\tau_h)$ at $\tau_h$'s release. Then, by the definition of $\overline{V}_h$, $\tau_h$ is guaranteed to meet its deadline. However, finding the correct amount of the deferred executions is complex as this requires backtracking on schedule. Thus, we simply treat any residue executions at $\tau_h$'s release as deferred executions. 
However, this leads to a pessimistic bound on $v_h$ because (i) not all (or even none) of the residue executions need be deferred executions and (ii) $\overline{V}_h$ is obtained with the assumption that all of $hp(\tau_h)$ are released together with $\tau_h$. In fact, the scheduler already has all the information needed to calculate a less-pessimistic budget $v_h$. This includes (i) the most recent release times of $hp(\tau_h)$ and (ii) their residue executions at $\tau_h$'s release.

Suppose $\tau_h$ is being released at time $t$. Then we calculate the upper bound on the interference from $hp(\tau_h)$ during the window $[t, t+d_h]$, i.e., until $\tau_h$'s deadline. 
Denoting it by $I_{h, t}$, 
\begin{align}\label{eq:I_h_t}
I_{h,t} = \sum_{\tau_{j} \in hp(\tau_h)} \bigg( \tilde{e}_{j,t} + \bigg\lfloor\frac{d_h-o_{j,t}}{p_j}\bigg\rfloor e_j + f_{j,t} \bigg),
\end{align}
where $f_{j,t}$ is the maximum amount of execution by the last release of $\tau_j$ by the end of the window and calculated by
\[f_{j,t} = \min \bigg (e_j, \quad d_h - \Big(o_{j,t} + \Big\lfloor\frac{d_h-o_{j,t}}{p_j}\Big\rfloor p_j \Big)  \bigg). \]
Note that the residue execution time $\tilde{e}_{j,t}$ and the earliest next arrival time $o_{j,t}$ of higher-priority task $\tau_{j}$ are already known at $t$. Then, $v_h$ is initialized to
\begin{align}\label{eq:v_h}
v_h = d_h - e_h - I_{h,t}.
\end{align}

\begin{theorem}\label{theorem:v_h_schedulable}
Each task $\tau_h$ is schedulable as long as it allows $lp(\tau_h)$ to execute for at most $v_h$ time units while $\tau_h$ is active. 
\end{theorem}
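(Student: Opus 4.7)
The plan is to verify that the definition $v_h = d_h - e_h - I_{h,t}$ makes the response-time arithmetic close, by arguing that $I_{h,t}$ genuinely upper-bounds the execution demand from $hp(\tau_h)$ in the window $[t, t + d_h]$ starting at $\tau_h$'s release $t$. Once that is established, capping $lp(\tau_h)$'s priority inversion at $v_h$ leaves exactly $e_h$ time units available for $\tau_h$ itself, which suffices to finish by $d_h$.

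First I would account for every unit of CPU time in $[t, t + d_h]$: the processor must be executing either $\tau_h$, some $\tau_j \in hp(\tau_h)$, or some $\tau_\ell \in lp(\tau_h)$ under an allowed priority inversion (idle intervals only improve the argument). If $lp(\tau_h)$ consumes at most $v_h$ units and $hp(\tau_h)$ consumes at most $I_{h,t}$ units during the window, then the time remaining for $\tau_h$ is at least $d_h - v_h - I_{h,t} = e_h$, which by assumption is enough to complete the job. So the whole theorem reduces to showing $\sum_{\tau_j \in hp(\tau_h)} (\text{execution of } \tau_j \text{ in } [t, t+d_h]) \le I_{h,t}$.

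To establish that bound, I would decompose each higher-priority task's demand in the window into the three terms that appear in \eqref{eq:I_h_t}. The residue $\tilde{e}_{j,t}$ of the currently active job of $\tau_j$ at the scheduler state is known exactly and conservatively upper-bounds its remaining work (since the scheduler uses WCET). Beyond $t$, the worst case is that $\tau_j$ arrives as soon as possible: the earliest next release is at offset $o_{j,t}$, and subsequent releases follow at the minimum inter-arrival time $p_j$. The number of full releases of $\tau_j$ whose \emph{entire} execution can fall within $[t, t+d_h]$ is therefore at most $\lfloor (d_h - o_{j,t})/p_j \rfloor$, each contributing at most $e_j$. Finally there may be one last release of $\tau_j$ that arrives inside the window but whose deadline exceeds $t + d_h$; the amount of its budget that can be consumed before $t + d_h$ is at most $e_j$ and also at most $d_h - (o_{j,t} + \lfloor (d_h - o_{j,t})/p_j \rfloor p_j)$, which is exactly $f_{j,t}$. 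Summing these three contributions over $hp(\tau_h)$ gives $I_{h,t}$.

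The main obstacle is the careful treatment of the tail term $f_{j,t}$: one has to check that the $\min$ in its definition correctly accounts for the case where $(d_h - o_{j,t})/p_j$ is an integer (the last release coincides with $t+d_h$ and contributes nothing), as well as the case where the last release is early enough inside the window that a full $e_j$ fits. I would also briefly note that the use of WCETs rather than true execution times, and the use of the earliest-possible arrival offsets $o_{j,t}$ rather than actual future arrivals, both act in the conservative direction, so the inequality remains valid regardless of the realized schedule. Putting these pieces together with the identity $e_h + v_h + I_{h,t} = d_h$ completes the proof.
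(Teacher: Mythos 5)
Your proposal establishes what the paper treats as the easy (essentially definitional) half of the argument and misses the hard half entirely. You correctly observe that $I_{h,t}$ from~\eqref{eq:I_h_t} is a sound upper bound on the demand of $hp(\tau_h)$ over $[t, t+d_h]$, and that if $lp(\tau_h)$ consumes at most $v_h$ and $hp(\tau_h)$ at most $I_{h,t}$, then $\tau_h$ is left with $d_h - v_h - I_{h,t} = e_h$ units, which is enough. But this accounting identity is simply the definition of $v_h$ rearranged; it only yields a meaningful guarantee if $v_h \ge 0$. If $v_h$ were negative, the budget cap cannot even be met (a task cannot run for a negative duration), and the accounting shows $\tau_h$ would have \emph{strictly less} than $e_h$ units available even with zero priority inversion, i.e., $\tau_h$ could miss its deadline. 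You never flag this, and your argument would silently fail in that case.

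The entire substance of the paper's proof is establishing precisely that $v_h \ge 0$, i.e., $e_h + I_{h,t} \le d_h$. This is non-trivial and requires ingredients you never invoke: the proof rewrites $I_{h,t}$ as the overflow $\rho_{h,t} = \sum_{\tau_j \in hp(\tau_h)} \tilde{e}_{j,t}$ plus the synchronous-release interference, then crucially relies on \textsc{Test I-2} having earlier ensured that the overflow at $\tau_h$'s release satisfies $\rho_{h,t} \le \overline{V}_h$. Combined with the definition of the maximum slack $\overline{V}_h$ from~\eqref{eq:nominal_budget}, this yields
\[
e_h + \rho_{h,t} + \mathbf{I_h}(e_h + \rho_{h,t}) \;\le\; e_h + \overline{V}_h + \mathbf{I_h}(e_h + \overline{V}_h) \;\le\; d_h,
\]
and the second summation in $I_{h,t}$ is absorbed into $\mathbf{I_h}(e_h + \rho_{h,t})$. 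Without this chain — in particular without invoking the guarantee provided by \textsc{Test I-2} and the role of $\overline{V}_h$ — you have no reason to believe $v_h$ is non-negative, and the theorem as the algorithm actually uses it (\textsc{Test A} checks $v_h \ge 0$) is not established. This is a genuine gap, not a stylistic difference.
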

\begin{proof}
This is equivalent to proving that $v_h$ in \eqref{eq:v_h} is non-negative, i.e., $e_h + I_{h,t}\le d_h$. First of all, $I_{h,t}$ in \eqref{eq:I_h_t} can be rewritten as
\begin{align}\label{eq:another_I_h_t}
I_{h,t} = \hspace{-0.2cm}\sum_{\tau_{j} \in hp(\tau_h)} \tilde{e}_{j,t}  +  \hspace{-0.2cm}\sum_{\tau_{j} \in hp(\tau_h)} \bigg( \bigg\lfloor\frac{d_h-o_{j,t}}{p_j}\bigg\rfloor e_j + f_{j,t} \bigg). 
\end{align}
The first summation is $\rho_{h,t}$, i.e., the overflow of $hp(\tau_h)$ at time $t$. Recall that the algorithm, in particular \textsc{Test I-2}, ensures that the upper bound of the overflow (thus the actual overflow, $\rho_{h,t}$, itself as well) does not exceed $\tau_h$'s maximum slack, $\overline{V}_h$, as explained previously. Now, the sum of $\tau_h$'s execution and the overflow can be viewed as an execution of size $e_h + \rho_{h,t}$. This busy interval cannot grow longer than $d_h$ due to the definition of $\overline{V}_h$. 
That is, by \eqref{eq:nominal_budget} and $\rho_{h,t} \le \overline{V}_h$,  
\[e_h + \rho_{h,t} + \mathbf{I_h}(e_h + \rho_{h,t}) \le e_h + \overline{V}_h + \mathbf{I_h}(e_h + \overline{V}_h) \le d_h.\]
Here $\mathbf{I_h}(\cdot)$, as previously explained, assumes the worst-case release pattern of $hp(\tau_h)$: when they are aligned with $\tau_h$'s release. Hence, the second summation in \eqref{eq:another_I_h_t} is upper-bounded by $\mathbf{I_h}(e_h + \rho_{h,t})$ above. Therefore, 
\[e_h + I_{h,t} \le e_h + \rho_{h,t} + \mathbf{I_h}(e_h + \rho_{h,t}) \le d_h,\]
which proves the theorem.
\end{proof}

\vspace{0.3\baselineskip}
\noindent\textbf{Summary of Approximate \NewTS{}:} The approximate \NewTS{} has a similar algorithmic structure as the exact version in Section~\ref{subsec:exact_newTS}, except that the maximum slack $\overline{V}_i$ of every $\tau_i\in \Gamma$ is calculated off-line first in the approximate version. Then, for each ready job $\tau_{(i)}$, a candidacy test is performed by testing each $\tau_h \in hp(\tau_{(i)})$ against \textsc{Test I} or \textsc{Test A} depending on $\tau_h$'s states (active vs. inactive) at the time of the candidacy test. If $\tau_h$ is inactive, \textsc{Test I-1} is first applied. If it fails, \textsc{Test I-2} is performed. If it fails again, $\tau_{(i)}$ is not added to the candidate list and the candidate search stops, similarly to the exact algorithm. When $\tau_h$ is released, its inversion budget $v_h$ is initialized to \eqref{eq:v_h}. From then, the test of $\tau_h$ is done by \textsc{Test A}, i.e., checking if $v_h \ge 0$. As in the case of the exact algorithm, this candidate selection process can be implemented by a single loop over the task set. 

The critical path of the approximate \NewTS{} is when almost every task is inactive. For each inactive task $h$, we iterate over $hp(\tau_{h})$ to calculate the projected interference. Hence, worst-case time complexity is $O(N^2)$.

\subsection{Weighted Job Selection}\label{subsec:weighted_selection}

\begin{figure}[t]
\centering
\includegraphics[width=1\columnwidth]{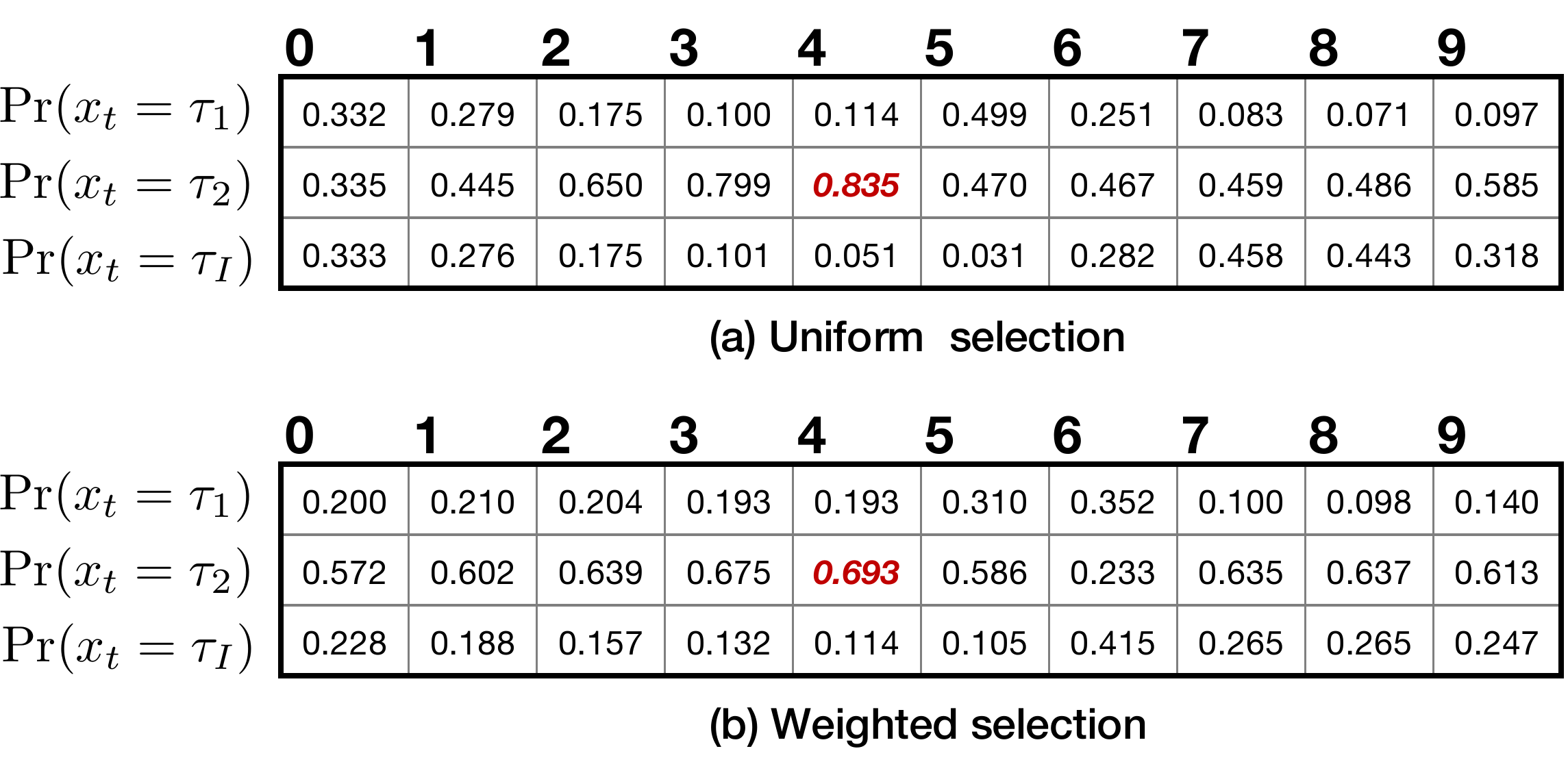}
\caption{Probability of task execution at each time slot for two tasks $\tau_1:=(p_1=5, e_1=1)$ and $\tau_2:=(p_2=7,e_2=4)$.
}
\label{fig:weighted_job_selection}
\end{figure}

Now, given a list of candidate jobs, the scheduler picks one randomly. In \TS{} \cite{YoonTaskShuffler}, the selection is done with the uniform probability: each job has an equal chance of $\frac{1}{n}$ where $n$ is the number of candidate jobs. Counter-intuitively, this leads some tasks to appear in particular time slots more often. Consider an example in Figure~\ref{fig:weighted_job_selection}(a), which shows the probability of seeing a particular task at each time slot for the first 10 time slots in each hyper-period when two tasks $\tau_1:=(p_1=5, e_1=1)$ and $\tau_2:=(p_2=7,e_2=4)$ run under \NewTS{} for 100,000 hyper-periods (i.e., 3.5 million time slots). The probability of seeing $\tau_2$ at time 4 in each hyper-period is $83.5\%$. Such a high probability is because $\tau_1$, which has a smaller execution, would be \emph{unlikely} to have a remaining execution as time proceeds. This is in turn due to the equal chance of being selected at each slot. Accordingly, it is highly probable that $\tau_2$ would have a remaining execution at time $4$. 

In order to alleviate such biases, we propose a \emph{weighted} selection process that considers the remaining execution until the deadline. Suppose a selection is made at time $t$ from the candidate list $L_{\mathcal{C}}$. For each task $\tau_i$ in the list, the scheduler uses the \emph{remaining utilization} 
\[u_{i,t} = \tilde{e}_{i,t}/({d_{i,t}-t}),\]
where $\tilde{e}_{i,t}$ and $d_{i,t}$ are the residue execution and the current deadline of $\tau_i$ at time $t$ (defined in Section~\ref{subsec:exact_newTS}), respectively. 
Then, each task is assigned a normalized weight  
$\omega_{i,t} = u_{i,t}\big/{\sum_{\tau_k \in L_{\mathcal{C}}}u_{k,t}}$.  The scheduler performs a weighted random selection based on the $\omega_{i,t}$ values. Note that $u_{i,t}$ directly reflects the probability of the remaining execution to appear until the deadline. Hence, it is desired to keep $\max_i[u_{i,t}]$ small for any time $t$ to reduce the chance of correct timing inference. If $u_{j,t}>u_{k,t}$ for two jobs $\tau_j$ and $\tau_k$ at time $t$, selecting $\tau_j$ may lead to $\max_i[u_{i,t}]>\max_i[u_{i,t+1}]$ while selecting $\tau_k$ always leads to $\max_i[u_{i,t}]\le \max_i[u_{i,t+1}]$ which indicates increased chance of correct timing inference (discussed shortly in Section~\ref{subsec:slot_min_entropy}). Therefore, giving a higher chance to a job with a higher remaining utilization (i.e., $\tau_j$ if $u_{j,t}>u_{k,t}$) reduces the chance of temporal locality. 
Figure~\ref{fig:weighted_job_selection}(b) shows the result of applying this process to the two-task example used above. It should be noted that the goal of the weighted selection is not to make the probabilities of different tasks as equal as possible, but to make the probability of seeing a task 
as invariant over time as possible. 
In the next section, we discuss the implication of this reduced bias from an adversary's perspective.

\section{Schedule Min-Entropy}
\label{sec:entropy}

In \cite{YoonTaskShuffler}, \emph{Schedule Entropy} is introduced as a measure of the amount of uncertainty in a randomized schedule of a task set. It is defined over the probability distribution of the hyper-period schedules. 
However, it cannot quantify how difficult it is to make a correct prediction on task execution at an \emph{arbitrary} time instant. In this section, we introduce a new metric that measures the \emph{worst-case vulnerability} of an arbitrary schedule to timing inference. %attack.  

\subsection{Slot Min-Entropy}\label{subsec:slot_min_entropy}

Let us consider Figure~\ref{fig:weighted_job_selection}(a) again. $x_t$ is a random variable, whose domain is $\chi = \Gamma \cup \{ \tau_I\}$, indicating which task executes at time slot $t$. 
In the example, the Shannon entropy~\cite{Shannon:1948} over $\Pr(x_2)$, which is about 1.29, is almost equal to that over $\Pr(x_8)$. However, when an adversary is to make a guess of which task would execute, he/she has a higher chance of making a correct guess in the \emph{first} attempt for $t=2$ than for $t=8$ because $\max_{x\in \chi}(\Pr(x_2=\tau_x)) = 0.650$ is larger than $\max_{x\in \chi}(\Pr(x_8=\tau_x)) = 0.486$. This is similar to the notion of \emph{vulnerability} \cite{Smith:2009} in quantitative information flow analysis. In our context, $\max_{x\in \chi}(\Pr(x_t=\tau_x))$ is the worst-case probability, or the best-case probability from the adversary's perspective, that the adversary can make a correct prediction on task execution at time $t$ in a single attempt. If it is 1, the adversary can always correctly determine which task will run at $t$. Hence, a low value of $\max_{x\in \chi}(\Pr(x_t=\tau_x))$ is desired to reduce the possibility of correct prediction.

Often the vulnerability is expressed in entropy called \emph{min-entropy} \cite{renyi1961, Smith:2009}. Applying it to the context of timing inference attack, we define \emph{slot min-entropy}:
\begin{definition} 
The slot min-entropy of time slot $t$ is 
\begin{align}
H_{\infty}(x_t)=-\log\Big( \max_{\tau_x\in \Gamma} \big(\Pr(x_t=\tau_x) \big) \Big).
\end{align}
For instance, $H_{\infty}(x_2)=0.431$ and $H_{\infty}(x_8)=0.722$ for the example in Figure~\ref{fig:weighted_job_selection}(a). 
\end{definition}
Notice that the domain of $x_t$ here does not include the idle task. That is, we do not concern the adversary's correct guess on the idle task's execution. The slot min-entropy is lower-bounded by $0$ and is upper-bounded by the Shannon entropy. What a slot min-entropy $H_{\infty}(x_t)$ tells is that the worst-case probability of correct guess is at best $(1/2)^{H_{\infty}(x_t)}$. 

\subsection{Schedule Min-Entropy}\label{subsec:sched_min_entropy}
As a hyper-period schedule-level measure, we define the \emph{schedule min-entropy} as follows.
\begin{definition} 
The schedule min-entropy of schedule $S$, where $S$ is the set of time slots over the hyper-period of length $L$, i.e., $S=\{x_0, x_1, \ldots, x_{L-1}\}$,
is the minimum of the slot min-entropies:
\begin{align}
H_{\infty}(S) = \min_{x_t \in S} H_{\infty}(x_t).
\end{align}
\end{definition}
\revised{Accordingly, $(1/2)^{H_{\infty}(S)}$ is the best-case probability that an adversary can correctly guess, \emph{by chance}, which task would run at an \emph{arbitrary} time slot over $S$. In the example in Figure~\ref{fig:weighted_job_selection}(a), $H_{\infty}(S) = 0.206$ (at $t=18$, which is not shown). It is $0.422$ at $t=19$ in the example in (b). These indicate that if the adversary makes a random guess of $x_t = \tau_2$ for every slot $t$, the chance of correct guess is $86.7\%$ and $74.6\%$, respectively. These are the worst-case from the defender's perspective, and hence ${H_{\infty}(S)}$ captures the `weakest link' among all tasks for the whole observation duration. The defender would like to decrease this worst-case chance (i.e., the adversary's best-chance). If $H_{\infty}(S_1)<H_{\infty}(S_2)$ for two schedules $S_1$ and $S_2$, $S_2$ can be said more secure than $S_1$ against the adversary's best guess. Note that the schedule min-entropy does not quantify the information embedded in the order of task executions as $x_t$ is not conditioned on $x_0, \ldots, x_{t-1}$. If the attacker is able to observe the schedule up to present, i.e., $x_0, \ldots, x_{t-1}$, as is the case in \cite{Nasri2019}, a conditional entropy can model the vulnerability due to the attacker's observation on the execution order.}

The following theorem establishes an upper bound on the schedule min-entropy of a given task set.\footnote{The lower bound is trivially zero, which is obtained when schedule is fixed. }
\begin{theorem}\label{theorem:upper_sched_min_entropy}
The upper bound on $H_{\infty}(S)$ of a task set $\Gamma$ is
\begin{align}
\overline{H}_{\infty}(S) =  -\log\Big( \max_{\tau_x\in \Gamma} u_x \Big).
\end{align}
\end{theorem}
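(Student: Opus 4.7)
The plan is to exploit a pigeonhole / averaging argument on the time slots over one hyper-period. Pick the task $\tau_{x^{*}}$ that attains $u_{x^{*}} = \max_{\tau_x\in\Gamma} u_x$. I will show that the slot min-entropy is at most $-\log u_{x^{*}}$ at some time $t^{*}$, which by the minimum-over-slots definition of $H_{\infty}(S)$ yields the theorem.

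First I would establish the aggregate-occupancy identity. Because $\Gamma$ is schedulable, every one of the $L/p_{x^{*}}$ jobs of $\tau_{x^{*}}$ released in a hyper-period must complete by its deadline, and (under the paper's worst-case assumption) each job occupies exactly $e_{x^{*}}$ slots. Hence in \emph{every} realization of the randomized schedule, $\tau_{x^{*}}$ occupies exactly $(L/p_{x^{*}})\cdot e_{x^{*}} = L\,u_{x^{*}}$ of the $L$ slots of the hyper-period. Taking expectation over the randomness of the scheduler, this gives
\begin{equation}
\sum_{t=0}^{L-1} \Pr(x_t = \tau_{x^{*}}) \;=\; L\,u_{x^{*}}.\nonumber
\end{equation}

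Next I would apply averaging: since the average of $\Pr(x_t=\tau_{x^{*}})$ over the $L$ slots equals $u_{x^{*}}$, there must exist a slot $t^{*}$ with $\Pr(x_{t^{*}} = \tau_{x^{*}}) \geq u_{x^{*}}$. Consequently,
\begin{equation}
\max_{\tau_x \in \Gamma} \Pr(x_{t^{*}} = \tau_x) \;\geq\; \Pr(x_{t^{*}} = \tau_{x^{*}}) \;\geq\; u_{x^{*}} \;=\; \max_{\tau_x\in\Gamma} u_x,\nonumber
\end{equation}
so taking $-\log$ on both sides,
\begin{equation}
H_{\infty}(x_{t^{*}}) \;\leq\; -\log\!\Big(\max_{\tau_x\in\Gamma} u_x\Big) \;=\; \overline{H}_{\infty}(S).\nonumber
\end{equation}
Finally, $H_{\infty}(S) = \min_{x_t\in S} H_{\infty}(x_t) \leq H_{\infty}(x_{t^{*}}) \leq \overline{H}_{\infty}(S)$, which is the desired bound.

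The one subtlety worth flagging is the aggregate-occupancy step: it relies on treating each job as consuming its full WCET $e_i$, which is exactly the worst-case viewpoint the paper adopts for the defender. If actual execution times could undershoot WCET, the identity becomes an inequality $\sum_t \Pr(x_t=\tau_{x^{*}}) \le L u_{x^{*}}$ in the wrong direction, so I would be explicit that we evaluate $\overline{H}_{\infty}$ under the worst-case execution assumption (consistent with the paper's adversary/defender model). Apart from this, the argument is a direct averaging/pigeonhole and poses no real obstacle.
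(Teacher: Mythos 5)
Your proof is correct and follows essentially the same averaging/pigeonhole argument as the paper: the paper sums $\Pr(x_t=\tau_x)$ over a single period of $\tau_x$ (total $e_x$) to conclude some slot has probability at least $u_x$, while you sum over the whole hyper-period (total $L u_x$), which is the same reasoning applied to a larger window. Your explicit flag about the WCET assumption is a fair observation but does not represent a departure — the paper's proof quietly makes the same assumption when asserting that $\tau_x$ "executes for $e_x$ over its period $p_x$."
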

\begin{proof}
    For any schedulable task $\tau_x$, it executes for $e_x$ over its period $p_x$. Hence, if the execution is evenly distributed (i.e., ideal randomization), $\Pr(x_t=\tau_x)=\frac{e_x}{p_x}$ for an arbitrary time slot $t$, where $\frac{e_x}{p_x}=u_x$ is the utilization of $\tau_x$ as explained in Section~\ref{subsec:system_model}. If the execution is not evenly distributed, there exists at least one slot $t$ for which $\Pr(x_t=\tau_x)\ge u_x$ since the sum of $\Pr(x_t=\tau_x)$ over the period is $e_x$. This means that 
    \begin{align}
    \max_{x_t\in S} \Pr(x_t=\tau_x)\ge u_x.\label{eq:max_pr}
    \end{align}
    Now, by the definitions of $H_{\infty}(S)$ and $H_{\infty}(x_t)$ and \eqref{eq:max_pr}
    \begin{align}
    H_{\infty}(S) 
    &\hspace{-0.1cm} = \hspace{-0.1cm}-\log \bigg[ \max_{\tau_x\in \Gamma}   \Big(\max_{x_t \in S}  \big(\Pr(x_t=\tau_x) \big)   \Big) \bigg]\hspace{-0.1cm} \le \hspace{-0.1cm}-\log \Big( \max_{\tau_x\in \Gamma}   u_x    \Big). 
    \nonumber
    \end{align}
\end{proof}
\vspace{-0.3cm}
Hence, in the adversary's best-case (thus the defender's worst-case), the probability of making a correct guess on task execution at an arbitrary time cannot be lower than $(1/2)^{\overline{H}_{\infty}(S)} = \max_{\tau_x\in \Gamma}u_x$, i.e., the largest task utilization. Therefore, a task set of smaller workload tends to have a higher schedule min-entropy (hence less vulnerable to timing inference), as we will see in the next section.

\section{Evaluation}
\label{sec:eval}

\subsection{Evaluation Setup}
\label{subsec:eval_setup}

We use the same parameters as in \cite{YoonTaskShuffler} to generate random synthetic task sets. 
Total $6000$ sets are evenly generated from ten base
utilization groups, $[0.02+0.1\cdot i, 0.08+0.1\cdot i]$ for $i = 0, . . . ,
9$. The base utilization of a set is defined as the total sum of the task utilizations. 
 Each group has six sub-groups, each of which has a fixed number of tasks -- $5$, $7$, $9$, $11$, $13$ and $15$. 
 This is to
generate task sets with an even distribution of tasks. Each task period
is a divisor of $3000$ (but not smaller than $10$). This is to set {\em a common
hyper-period} (i.e., $3000$) over all the task sets. The task execution times are
randomly drawn from $[1,50]$. The deadline for each task is the same as
its period and priorities are assigned according to the Rate Monotonic
algorithm~\cite{LiuLayland1973}. As mentioned earlier, we avoid introducing release jitter in this paper in order to isolate its impacts on the randomness and also because jitter effects were evaluated in~\cite{YoonTaskShuffler} (Figure 12). 

All of the $6000$ random sets are guaranteed to be \emph{schedulable} by the fixed-priority preemptive scheduling~\cite{Audsley93}. For each task set, we run the simulation for 100,000 hyper-periods. Tasks execute for their worst-case execution times because this creates the worst-case situation for the defender -- \revised{that is, removing the randomnesses only makes the timing inference easier for the adversary. Later, we also evaluate the impact of varying execution times (Figure~\ref{fig:exp_varying_exec_time}).}

In what follows, we compare the following three methods:
\begin{itemize}
\item \textsc{TS}: \TS{} algorithm in \cite{YoonTaskShuffler}, 
\item \textsc{TS++ Exact}: The exact \NewTS{} algorithm presented in Section~\ref{subsec:exact_newTS}, and
\item \textsc{TS++ Approx}: The approximate \NewTS{} algorithm presented in Section~\ref{subsec:approx_newTS}.
\end{itemize}
Unless otherwise specified, both \textsc{TS++} use the weighted job selection. 

\subsection{Results}
\label{subsec:results}

\begin{figure}[b]
  \centering
  \includegraphics[width=1\columnwidth]{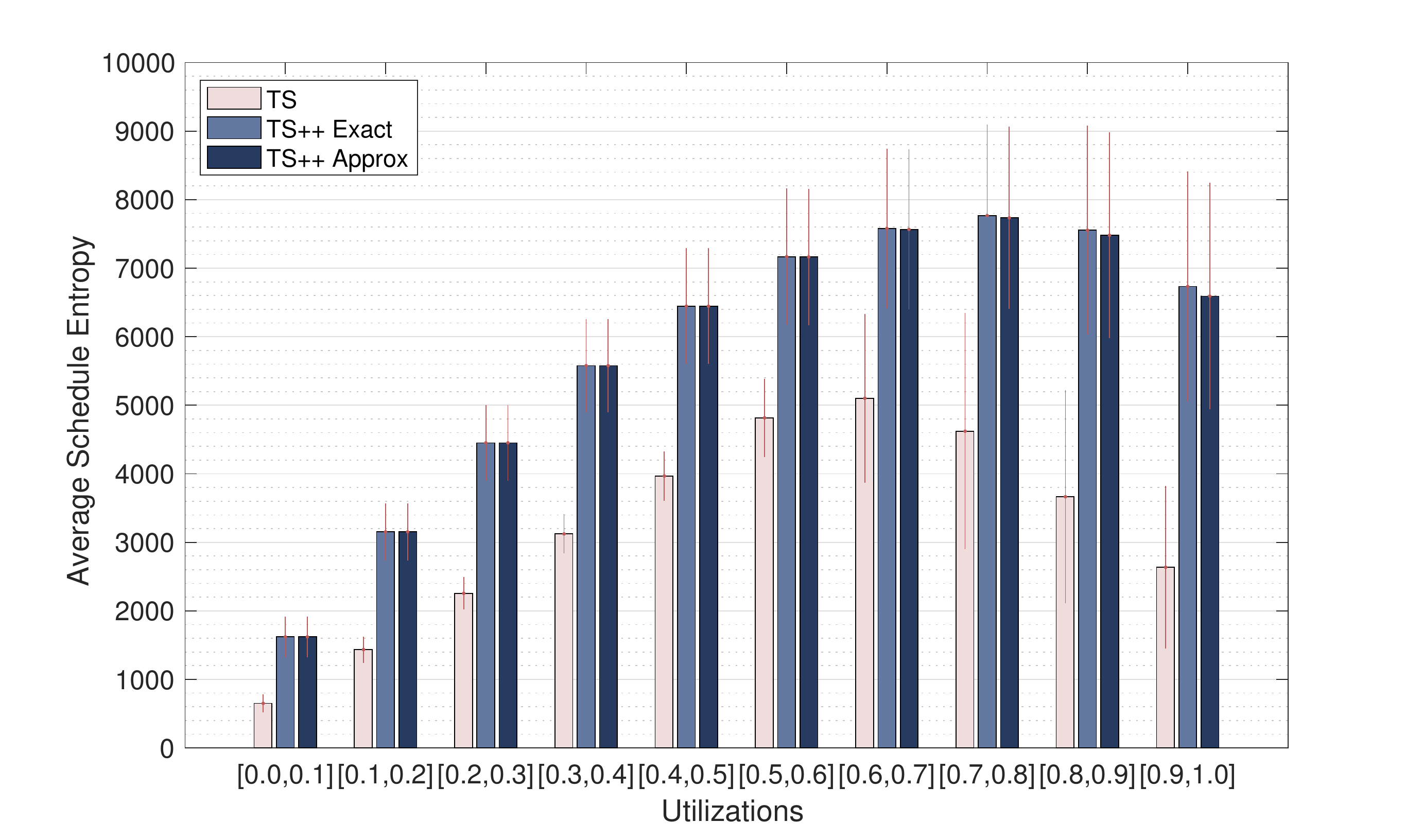}
  \caption{The average schedule entropy with \textsc{TS} and \textsc{TS++}. }
  \label{fig:exp_old_type_entropy}
  \end{figure}

We first compare the schedule entropy, which was used in \cite{YoonTaskShuffler}, of the three methods. Figure~\ref{fig:exp_old_type_entropy} shows the average schedule entropy for each utilization group. The \textsc{TS++} methods (both \textsc{Exact} and \textsc{Approx}) significantly increase the schedule entropy by removing the pessimism in priority inversion budgets in \textsc{TS}. We can also observe that the schedule entropy of \textsc{TS} decreases when the utilization is high. As discussed in \cite{YoonTaskShuffler} in detail, this is mainly because higher priority tasks have less budgets for priority inversion when they have high utilization. The level-$\tau_x$ exclusion policy of \textsc{TS} further decreases uncertainties due to the restrictive candidate selection. On the other hand, \textsc{TS++} methods do not suffer such problems. While a decreased randomness is unavoidable when the system is highly packed, the reduction in \textsc{TS++} methods is small due to the absence of the static (thus pessimistic) bounds on inversion budgets.

\begin{table*}[t]
  \caption{The percentage of task sets that have zero schedule min-entropy. }
  \label{table::exp_num_of_zero_entropy}\centering
  {\footnotesize
  \begin{tabular}{|c||c|c|c|c|c|c|}
  \hline
    Utilizations  & [0.4, 0.5] & [0.5, 0.6] & [0.6, 0.7] & [0.7, 0.8] & [0.8, 0.9] & [0.9, 1.0]\\ \hline\hline
    \textsc{TS}  & 0.50\% & 5.33\% & 17.50\% & 40.67\% & 69.33\% & 92.33\% \\\hline    
    \textsc{TS++ Approx}  & 0.00\%  & 0.00\% & 0.67\% & 3.50\% & 12.00\% & 28.67\% \\\hline            
    \textsc{TS++ Exact}  & 0.00\% & 0.00\% & 0.00\% & 0.00\% & 0.00\% & 0.00\% \\\hline
  \end{tabular}
  }
  \end{table*}

\begin{figure}[t]
\centering
\includegraphics[width=1\columnwidth]{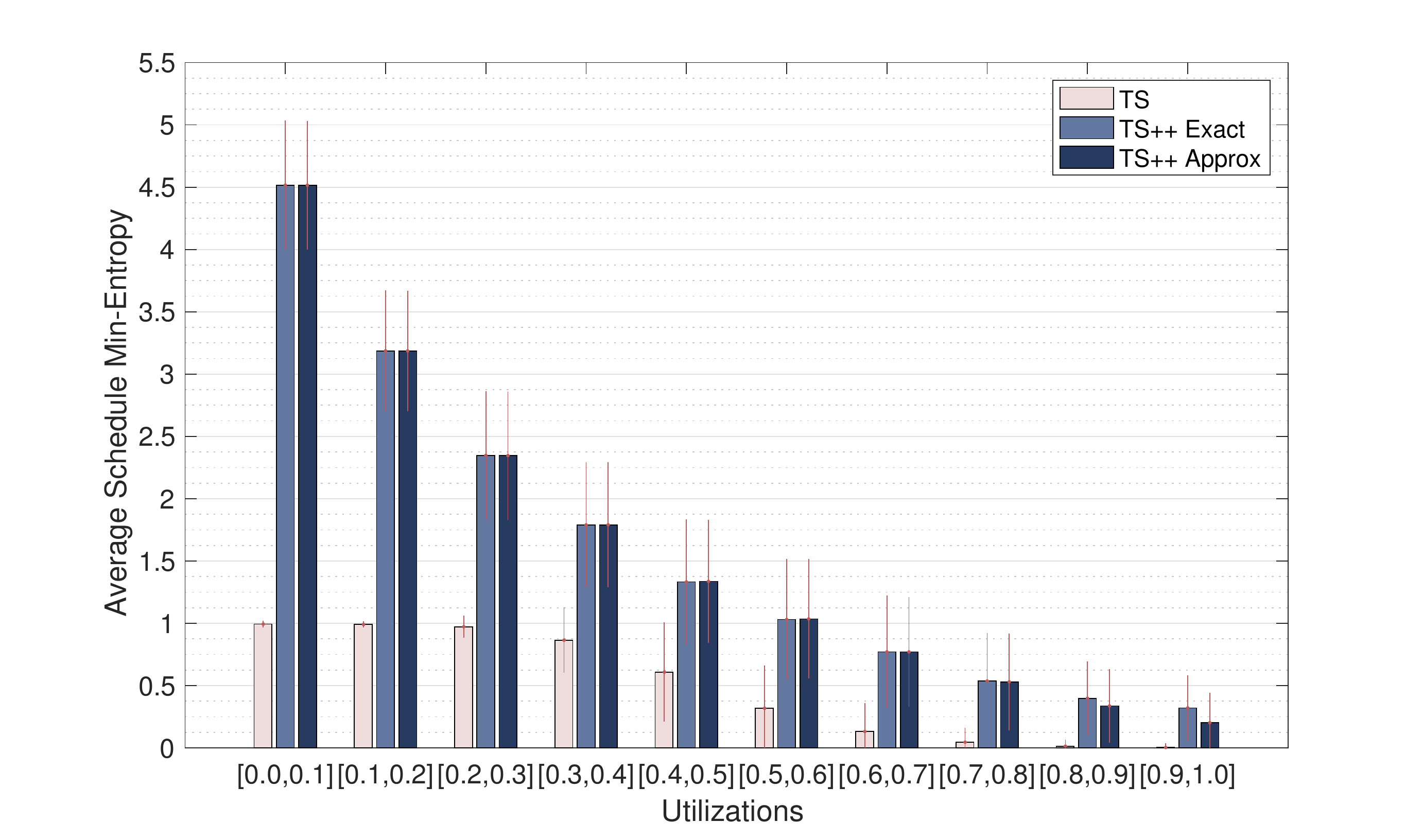}
\caption{The average schedule min-entropy when randomized by \textsc{TS} and \textsc{TS++}. A higher schedule min-entropy indicates a lower vulnerability to timing inference attack.}\label{fig:exp_new_entropy}
\end{figure}

Figure~\ref{fig:exp_new_entropy} compares the \emph{schedule min-entropy} of the three methods. It is interesting to note that, in contrast to the schedule entropy, the schedule min-entropy is high when the system utilization is low regardless of how the schedules are randomized. This is because such a task set is mainly composed of tasks with small utilization. Remind that the sum of $\Pr(x_t = \tau_i)$ over the period $p_i$ is $e_i$ as explained in Section~\ref{subsec:sched_min_entropy}. Accordingly, a small-utilization task inherently has a smaller probability of appearing in each time slot. For this very reason, the schedule min-entropy is low if tasks are likely to have high utilization. Figure~\ref{fig:exp_max_task_util_to_sched_min_entropy} shows how the \emph{maximum} task utilization of each task set affects its schedule min-entropy. As proven in Theorem~\ref{theorem:upper_sched_min_entropy}, the schedule min-entropy is upper-bounded by $-\log(\max u_i)$ and the result in the figure demonstrates that the schedule min-entropy is in relation to the maximum task utilization. Although $\max u_i$ is not an absolute indicator of the schedule min-entropy, we can see that a task set with small task utilization is in general less vulnerable to an adversary's correct prediction.

\begin{figure}[t]
  \centering
  \includegraphics[width=1\columnwidth]{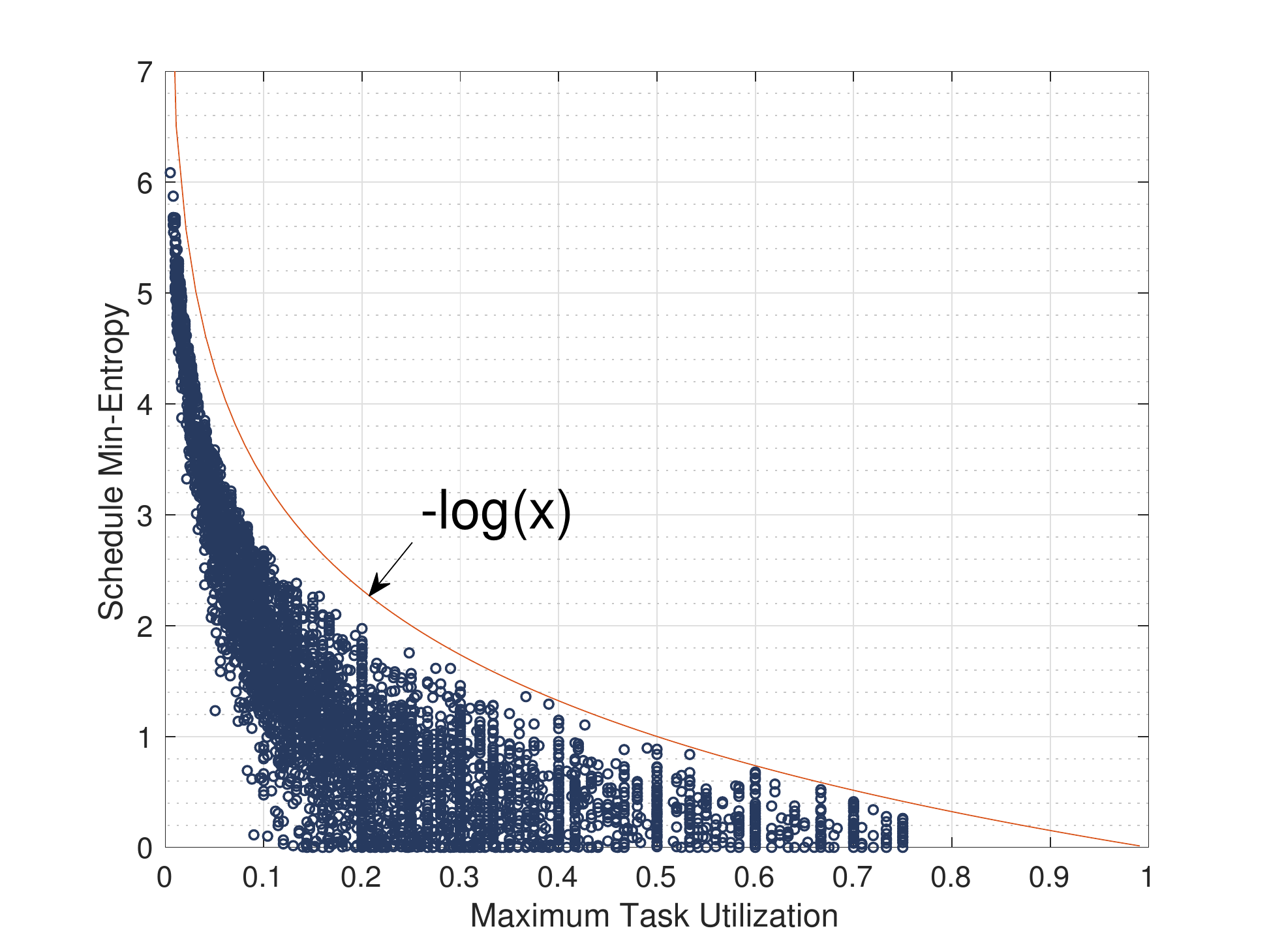}
  \caption{The per-task set maximum task utilization and the schedule min-entropy when randomized by \textsc{TS++ Approx}.}\label{fig:exp_max_task_util_to_sched_min_entropy}
  \end{figure}
Returning to the results in Figure~\ref{fig:exp_new_entropy}, we can see that the schedule min-entropy of \textsc{TS} is significantly lower compared to those of the \textsc{TS++} methods. Only few task sets (51 out of 6000 sets) have a schedule min-entropy above 1 under \textsc{TS}. This means that under the operation of \textsc{TS}, the chance that an adversary can make a correct guess on task execution at an arbitrary time is $50\%$ or higher in the best-case. This likelihood increases considerably with the system utilization. Table~\ref{table::exp_num_of_zero_entropy} shows the percentage of task sets that ended up having \emph{zero} schedule min-entropy; there exists at least one time slot on which the task execution is deterministic (i.e., $\Pr(x_t = \tau_i)=1$ for some $t$ and $\tau_i$). For \textsc{TS}, total 1354 sets have zero schedule min-entropy while \textsc{TS++ Approx} and \textsc{TS++ Exact} have 269 and 0 sets, respectively. It should be noted that, although rare, \textsc{TS++ Exact} could also result in zero schedule min-entropy depending on the task set's characteristic.

From these results we can see that \textsc{TS++ Approx} performs as well as \textsc{TS++ Exact} for most of the task sets, although they differ, albeit slightly, for high utilization group. This is caused by the approximations in the simulation of priority inversion in \eqref{eq:theorem_D1} and in the overflow $\overline{\rho}_{h,t,t'}$ in Section~\ref{subsec:approx_newTS} when testing if an inactive task $\tau_h$ would miss its deadline. These analyses become pessimistic, especially when computing the interference from higher priority tasks, as task utilizations increase. Accordingly, \textsc{TS++ Approx} becomes more conservative (thus randomizes less) when the system is highly packed. Figure~\ref{fig:v_h_ratio}, which shows the per-task set average ratio of $v_h$ to $\overline{V}_h$, supports this. We used the initial remaining budget that is set at job release as $v_h$. Recall that $\overline{V}_h$ is used for testing an inactive task's schedulability; if the predicted overflow exceeds it, priority inversion is not allowed by $lp(\tau_h)$. Thus, a high $v_h/\overline{V}_h$ means $\tau_h$ would have been able to allow more priority inversions than estimated. 

\begin{figure}[t]
  \centering
  \includegraphics[width=1\columnwidth]{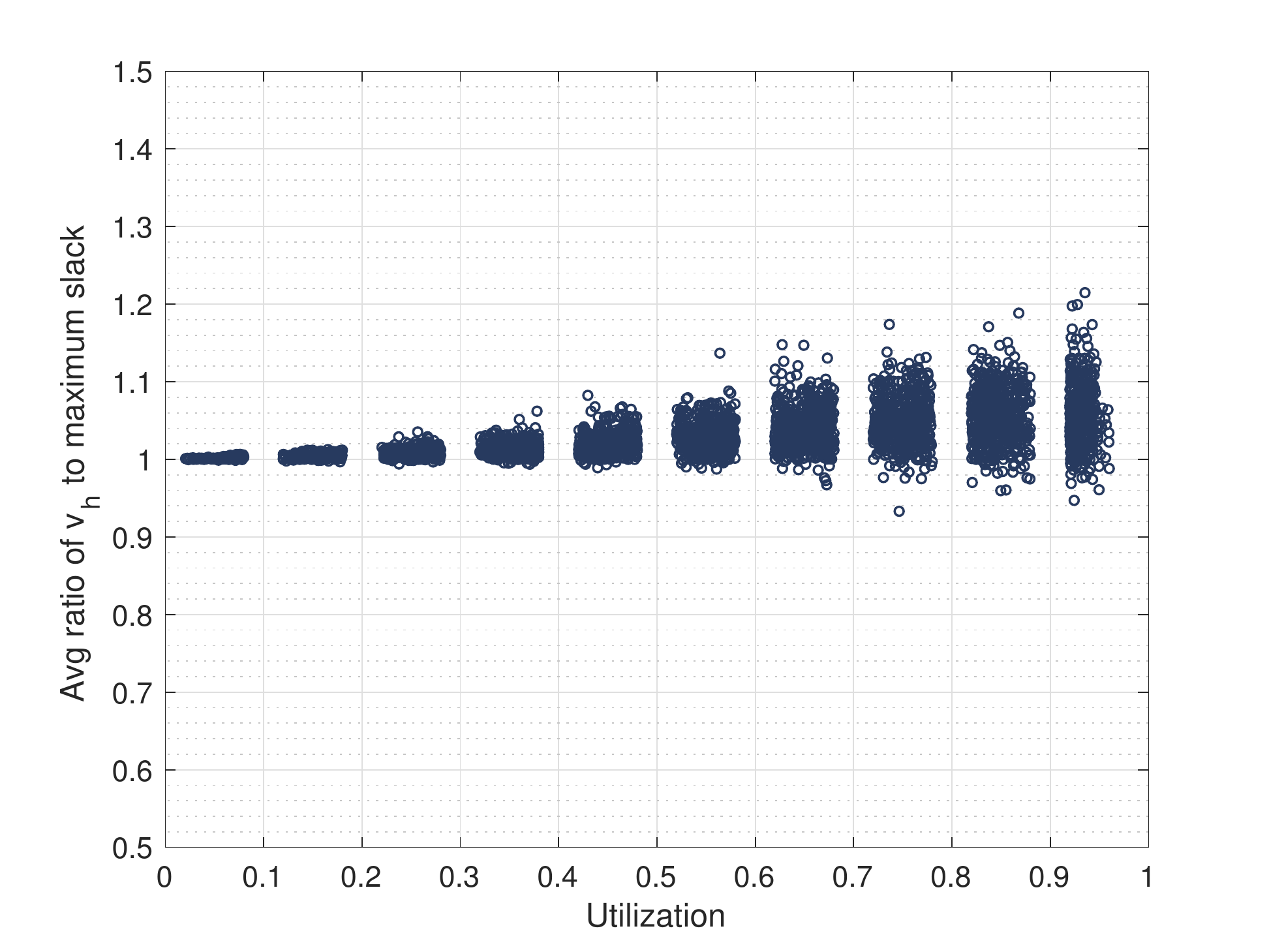}
  \caption{The per-task set average $v_h/\overline{V}_h$ in \textsc{TS++ Approx}. }\label{fig:v_h_ratio}
  \end{figure}  

An effective schedule randomization algorithm is desired to \emph{distribute} job executions over a wider range of time, otherwise it is easier for an adversary to target particular ranges to see job executions. Hence, we measured the execution range defined by the difference between the first and the last time slots where each task appears. The top plot in Figure~\ref{fig:exp_resp_time_range} shows the average ratio of the execution range to period in each task set when randomized by \textsc{TS} and \textsc{TS++ Approx}. As seen in the results, in contrast to \textsc{TS}, \textsc{TS++} achieves the ratios of 1 (except for one instance that achieved 0.995), which means that tasks can appear virtually \emph{anytime}. The reason why \textsc{TS} has narrower execution range is mainly because of the uniform job selection (not to mention the pessimism in inversion budget calculation) that leads tasks to have temporal locality as discussed in Section~\ref{subsec:weighted_selection}. In order to see the impact of the job selection process, we compare \textsc{TS++ Exact} with the weighted job selection against that with the uniform selection. As shown in the middle plot in Figure~\ref{fig:exp_resp_time_range}, it is the weighted job selection that makes task executions spread across a wider range. In addition, the comparison between \textsc{TS} and \textsc{TS++ Exact} with the uniform job selection indicates that the candidate selection process of \textsc{TS++} becomes more effective for system with high utilization. However, as the bottom plot in the figure shows, the gain obtained by the weighted job selection is offset by the pessimistic bounds on the inversion budgets in \textsc{TS} when the system has a high utilization. 

\begin{figure}[t]
  \centering
    \includegraphics[width=1\columnwidth]{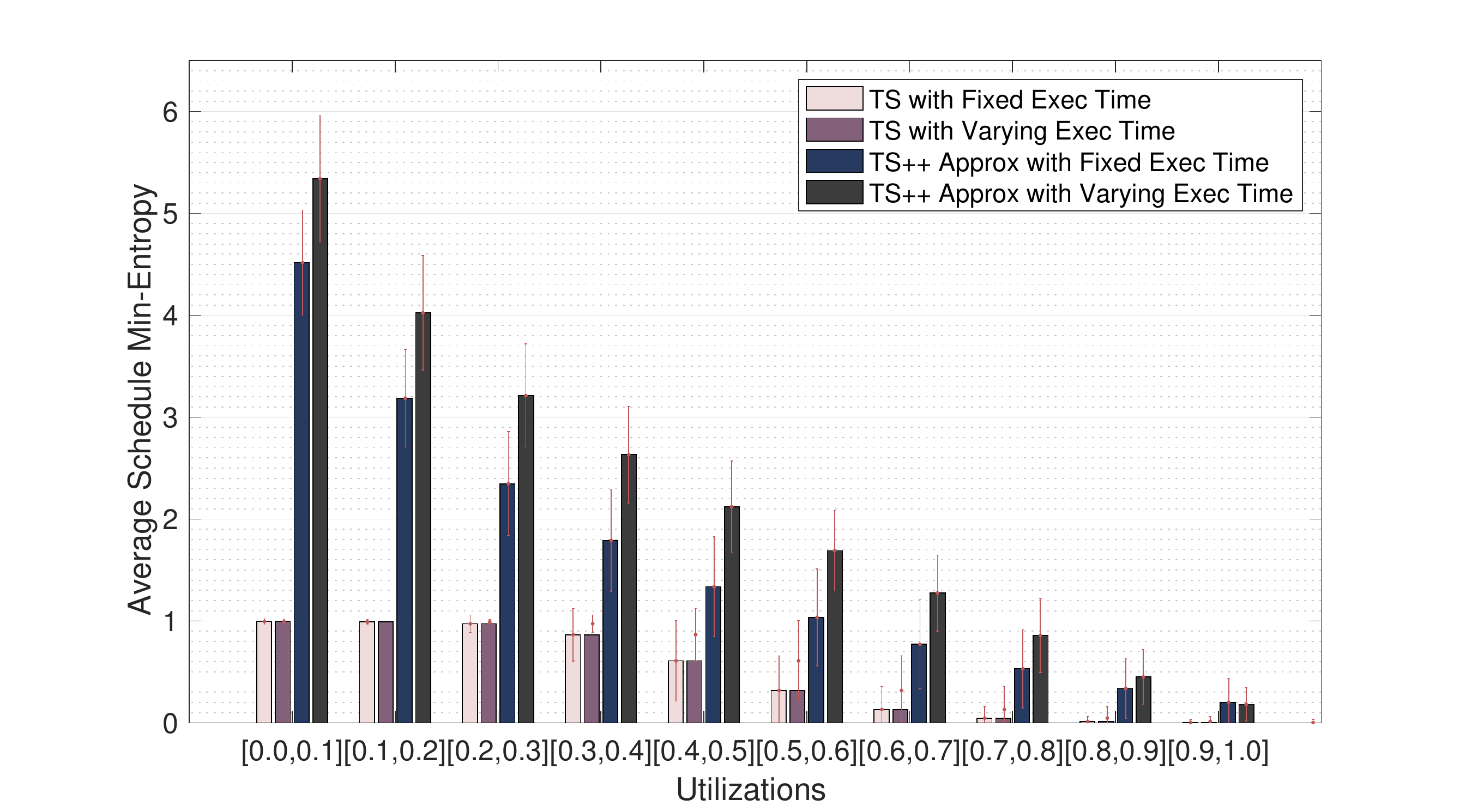}
  \caption{The average schedule min-entropy when job execution times are fixed or varying.}\label{fig:exp_varying_exec_time}
  \end{figure}
\revised{So far, the job execution times have been fixed to the worst-case execution times. As briefly mentioned earlier, varying the job execution times can naturally add more randomness to schedules. Figure~\ref{fig:exp_varying_exec_time} shows the results of \textsc{TS} and \textsc{TS++ Approx} -- the average scheduler min-entropy increases in both cases when job execution times are not fixed. For this experiment, we varied the execution time of each job invocation in a way that it is drawn from a uniform distribution over the interval $[70\%\cdot e_i, e_i]$ where $e_i$ is the worst-case execution time of $\tau_{i}$. Because the scheduler does not know what the execution time of each job invocation would be, all of the run-time computations of \textsc{TS++ Approx}, such as the overflow and maximum slack (See Section~\ref{subsec:approx_newTS}) still use the WCETs of tasks. The weighted job selection (Section~\ref{subsec:weighted_selection}) also uses the WCETs because the actual execution time is unknown until it finishes. The scheduler computes $\tilde{e}_{i,t}$, the residue execution time of $\tau_i$ at time $t$, by subtracting the amount of time $\tau_i$ has executed until $t$ from its WCET. Varying execution times can negatively affect the schedule randomness especially when the utilization is very high, as Figure~\ref{fig:exp_varying_exec_time} shows. This is mainly because jobs execute for shorter times than assumed by the scheduler. Hence, while a job could allow more priority inversions (because of its own shorter execution time and the higher-priority tasks' smaller interference), it may end up finishing earlier because of the over-estimation of busy period (thus under-estimation of the run-time slack) as a result of WCET-based estimations. The over-estimation grows with the utilization. }

\begin{figure}[t]
    \centering
    \includegraphics[width=1\columnwidth]{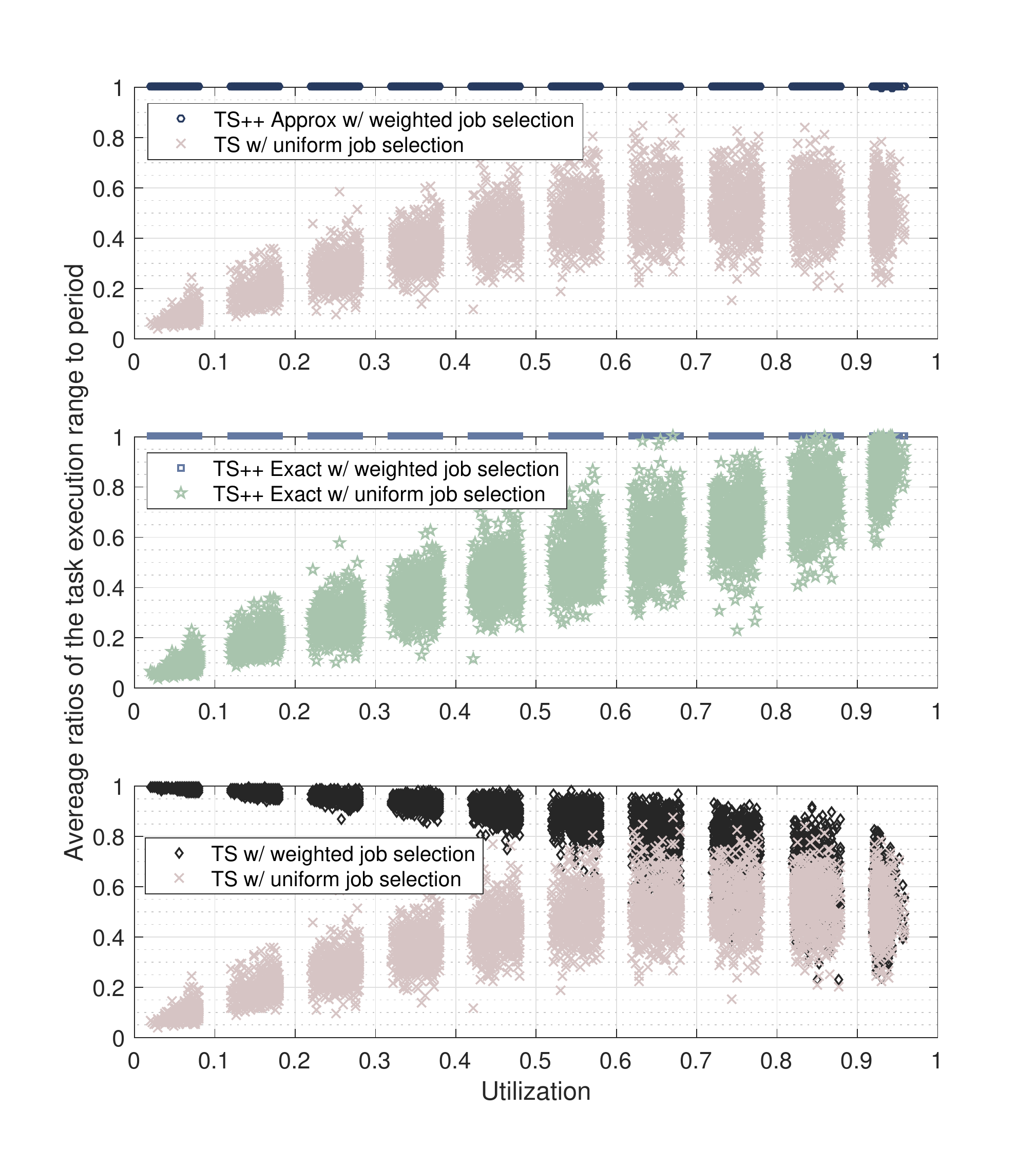}    
    \caption{The per-task set average ratio of the task execution range to period. }\label{fig:exp_resp_time_range}
    \end{figure}

Finally, we measured the number of context switches made by each task set to assess the cost associated with schedule randomness. On average, \textsc{TS++ Exact} causes 0.21\% more context switches than \textsc{TS++ Approx} (min=-0.23\%, max=6.79\%, stdev=0.61\%). \textsc{TS++ Exact} causes 47.65\% more context switches than \textsc{TS} on average (min=3.85\%, max=546.44\%, stdev=33.19\%). The extreme cases resulted from \textsc{TS}'s poor performance -- \textsc{TS} could not randomize schedules well, hence significantly fewer context switches and lower schedule entropy. Thus, the comparison of raw numbers of context switches is inadequate. Hence, we computed $\mathtt{EPS} = \frac{\mathtt{schedule\_min\_entropy}}{\mathtt{\#\_context\_switches}}$. Higher $\mathtt{EPS}$ means higher randomization at the same cost (or lower cost for the same level of randomization). On average, the $\mathtt{EPS}$ of \textsc{TS++ Exact} is 36.89\% higher than for \textsc{TS} (min=1.59\%, max=375.25\%, stdev=28.95\%).

\section{Related Work}
\label{related}

A number of studies have shown that real-time scheduling can leak information, whether intended or not. Son \emph{et al.} \cite{embeddedsecurity:son2006} showed that the rate monotonic scheduling is exposed to covert timing channel due to its scheduling timing constraints. Similarly, V\"{o}lp \emph{et al.} \cite{embeddedsecurity:volp2008} addressed the problem of information flows that can be established by altering scheduling behavior. The authors proposed modifications to the fixed-priority scheduler to close %unauthorized 
timing channels while achieving real-time guarantees. In \cite{embeddedsecurity:volp2013}, V\"{o}lp \emph{et al.} also tackled the issues of information leakage through shared resources such as real-time locking protocols (e.g., Priority Inheritance Protocol~\cite{Sha:1990:PIP}) and proposed transformation for them to prevent unintended information leakage \cite{embeddedsecurity:volp2013}. Similarly, architectural resources can be a source of unintended information flow. Mohan \emph{et al.} \cite{embeddedsecurity:mohan2014,Mohan:2016} addressed information
leakage through storage timing channels (e.g., shared caches) shared between real-time tasks with different security levels. The authors proposed a modification to the fixed-priority scheduling algorithm that cleans up shared storage (using flush mechanism) during a context-switch to a lower security-level task, and a corresponding schedulability analysis. This work was further extended to a generalized task model \cite{embeddedsecurity:pellizzoni2015} in which an optimal assignments of priority and task preemptibility considering security levels of tasks is also introduced. 

Chen \emph{et al.}~\cite{SchedLeakRTAS2019} demonstrated a timing inference attack against fixed-priority scheduling; an observer task can infer the timings (e.g., future arrival time) of certain tasks by observing its own execution intervals. Such attempts can be deterred by schedule randomization techniques such as \TS{} \cite{YoonTaskShuffler} and \NewTS{} presented in this paper. 
Kr\"uger \emph{et al.} \cite{Kruger2018ECRTS} proposed a randomization technique for time-triggered scheduling. To the best of our knowledge, these are the only studies that address real-time scheduling obfuscation against timing inference attacks, although randomization techniques can also be used for scheduling optimization 
(e.g., number of job completion \cite{Chrobak:2004}, CPU utilization \cite{perkovic2000randomization}). \revised{Nasri \emph{et al.} \cite{Nasri2019} proposed to use a conditional entropy to take into account the attacker's partial observation about the system. However, this requires a strong adversary model that the attacker can instantaneously observe the scheduler's state such as the ready queue and task schedule up to present.}

Randomization is a critical ingredient for moving target defense (MTD) techniques~\cite{okhravi2014finding}. Davi \emph{et al.}\cite{davi2013gadge} used address space layout randomization (ASLR) \cite{shacham2004} to randomize program code on-the-fly for each run to deter code-reuse attacks. Crane \emph{et al.} \cite{crane2015readactor} improved code randomization by enforcing execute-only memory to eliminate code leakage that allows an attacker to learn about the address space layout. Kc \emph{et al.} \cite{Kc:2003} took a finer-grained approach that creates a process-specific instruction set that is hard to be inferred by an adversary. Zhang \emph{et al.} \cite{zhang2014new} addressed a problem of information leakage through cache side-channels by randomly evicting cache lines and permuting memory-to-cache mappings. Jafarian \emph{et al.} \cite{Jafarian:2012} considered MTD in software defined networking (SDN) in which the controller randomly assigns (virtual) IP addresses to hosts in order to hinder adversaries from discovering targets. 

\section{Conclusion}
\label{sec:conclusion}

In this paper, we have presented \NewTS{} as a solution to raise the bar against timing inference attacks. By increasing timing uncertainty of real-time tasks, \NewTS{} poses a significant obstacle to adversaries' efforts to predict when tasks would execute, thus reducing their window of opportunity. We have shown that \NewTS{} increases the worst-case security of real-time schedules \emph{in a measurable way}; we did this by introducing a notion of schedule min-entropy, which captures an adversary's best chance of successful inference of task identity.  This enables a quantitative comparison of different task sets or different schedule randomization algorithms. Such an information-theoretic view opens up interesting questions on the security of real-time scheduling, such as the bandwidth of covert channels and mutual information between task timings, which we intend to investigate.

\end{document}